\documentclass{IEEEtran}  
\usepackage{cite}
\ifCLASSINFOpdf
\else
\fi 
\usepackage{caption}
\captionsetup{font=scriptsize,labelfont=scriptsize}
\usepackage{tikz, pgfplots}
\usepackage{pgfplots}
\usepackage{blindtext}
\usepackage{graphicx} 
\usepackage{mdwmath}
\usepackage{mdwtab}
\usepackage{bm} 
\usepackage{amsmath}
\usepackage{amssymb}
\usepackage{multicol} 
\usepackage{float}
\usepackage{epstopdf}
\usepackage{cancel}
\usepackage{cases} 
\usepackage{subfig}
\usepackage{color}  
\newenvironment{proof}[1][Proof]{\textbf{#1.} }{\ \rule{0.5em}{0.5em}}  
\newtheorem{Proposition}{Proposition}
\newcommand{\snr}{\mathrm{SNR}}
\newcommand{\sinr}{\mathrm{SINR}}  
\newcommand{\hd}{\mathrm{Hf}}
\newcommand{\hdc}{\mathrm{Hf_{Cl}}}
\newcommand{\hdo}{\mathrm{Hf_{Op}}}
\newcommand{\os}{\mathrm{SqBf}}
\newcommand{\osc}{\mathrm{SqBf_{Cl}}}
\newcommand{\oso}{\mathrm{SqBf_{Op}}}
\newcommand{\se}{\mathrm{SE}}
\newcommand{\zf}{\mathrm{ZF}}
\newcommand{\ini}{\mathrm{INI}}
\newcommand{\tr}{\mathrm{tr}}
\newcommand{\data}{\mathrm{data}}     

\setlength{\belowcaptionskip}{-15pt}
\begin{document} 
\title{Sequential Beamforming for Multiuser MIMO with Full-duplex Training}
\author{Xu Du, John Tadrous and Ashutosh Sabharwal  
\thanks{Xu Du, Ashutosh Sabharwal are with the Department of Electrical and Computer Engineering, Rice University, Houston, TX, 77005 (e-mails: \{Xu.Du, ashu\}@rice.edu).
 John Tadrous was with with the Department of Electrical and Computer Engineering, Rice University, Houston. He is now with Electrical and Computer Engineering Department, Gonzaga University, Spokane, WA, 99223 (e-mail: tadrous@gonzaga.edu) .
 
 The material in this paper was presented in part at the Asilomar conference on signals systems and computers, 2014 and 16th IEEE International Workshop on Signal Processing Advances in Wireless Communications, 2015. This work was partially supported by a grant from Intel and Verizon Labs, and NSF Grant CNS-1314822.} 
}
\bibliographystyle{IEEEbib}
\maketitle 
\begin{abstract}
Multiple transmitting antennas can considerably increase the downlink spectral efficiency by beamforming to multiple users at the same time. However, multiuser beamforming requires channel state information (CSI) at the transmitter, which leads to training overhead and reduces overall achievable spectral efficiency.  In this paper, we propose and analyze a sequential beamforming strategy that utilizes full-duplex base station to implement downlink data transmission concurrently with CSI acquisition via in-band closed or open loop training.  Our results demonstrate that full-duplex capability can improve the spectral efficiency of uni-directional traffic, by leveraging it to reduce the control overhead of CSI estimation. In moderate $\snr$ regimes, we analytically derive tight approximations for the optimal training duration and characterize the associated respective spectral efficiency. We further characterize the enhanced multiplexing gain performance in the high $\snr$ regime. In both regimes, the performance of the proposed full-duplex strategy is compared to the half-duplex counterpart to quantify spectral efficiency improvement. With experimental data [1] and 3D channel model [2] from 3GPP, in a $1.4 $ MHz $8\times 8$ system LTE system with the block length of $500$ symbols, the proposed strategy attains a spectral efficiency improvement of 130\% and 8\% with closed and open loop training, respectively. 
\end{abstract}
\IEEEpeerreviewmaketitle
\newtheorem{Definition}{Definition}
\newtheorem{theorem}{Theorem}
\newtheorem{Lemma}{Lemma}
 \newtheorem{proposition}{Proposition}
 \newtheorem{remark}{Remark}
  \newtheorem{observation}{Observation}
\section{Introduction} 
Multiuser MIMO downlink systems have the potential to increase the spectral efficiency by serving multiple users at the same time with a multiple-antenna base station. A base station with $M$ antennas can simultaneously support up to $M$ \emph{half-duplex}  single-antenna users at full multiplexing gain, if it has perfect channel information (CSI). Accurate channel knowledge at the transmitter is vital for achieving maximum spectral efficiency. For example, when no CSI is available at the base station, TDMA strategy is optimal~\cite{jafar2005isotropic}. Therefore, only one user can be supported with full multiplexing gain.  In transmitter beamforming based systems, CSI is obtained by either closed or open loop training, which is defined as below. 
\begin{itemize}
\item In \textsl{closed loop training} method, each user first estimates CSI by using the training pilots sent out by the base station. Then, the CSI is quantized and sent back to the base station\footnote{ Closed loop training with analog feedback is possible, but not considered in this paper. In~\cite{caire2010multiuser}, it is shown that, compared to open loop training, with the same amount of training symbols, closed loop training with analog feedback results in a larger error in uplink CSI training and greater interference due to precoding. In this paper, we only consider closed loop training with digital training and open loop training.}.

\item In \textsl{open loop training} method, base station learns downlink CSI by receiving training pilots from users through the uplink channel;  channel \emph{reciprocity} is then leveraged to estimate the downlink CSI from the uplink receptions.  
\end{itemize}
For time-varying channels, overhead due to CSI acquisition leads to significant spectral efficiency loss. Thus, CSI estimation overhead reduction remains an important challenge. In this paper, we investigate the use of full-duplex capability to \emph{reduce overhead of CSI estimation} to increase the spectral efficiency of downlink traffic. 

{   The recently developed full-duplex radios~\cite{bharadia2014full,duarte2012experiment,Jain2011practicalFD,everett2015measurement,aryafar2012midu,yin2013full,ahmed2013self} allow concurrent uplink and downlink data transmission. However, the limited receiver dynamic range~\cite{day2012full} and circuit-design~\cite{sabharwal2014band} in small form-factor handsets for full-duplex transmission remain challenging problems. In this paper, we assume that only the base station antenna array to be an $M$-antenna full-duplex MIMO array~\cite{bharadia2014full} and all the mobile nodes to be half-duplex. The potential rate gain region of a full-duplex node is analyzed in~\cite{ahmed2013rate}. In~\cite{bai2013distributed}, a full-duplex base station is used to increase spectral efficiency by serving half-duplex downlink and uplink traffic simultaneously. In this paper, we propose an alternative use of full-duplex, where full-duplex capability is harnessed to increase downlink spectral efficiency by saving on the training time and thus reducing control channel overhead.} 
In particular, our key contributions in this paper are as follows:
{ 
\begin{itemize}
\item We propose a sequential beamforming strategy for multiuser downlink transmissions with either closed or open loop training. Instead of waiting to receive all CSI before starting data transmission, the base station now begins transmitting to some users as it receives their CSI. 

\item The simultaneous transmission of feedback and data creates additional inter-node interference at the downlink receiving users. By optimizing the optimal training duration, we then analyze the spectral efficiency of the proposed sequential beamforming strategy and demonstrate its relative spectral efficiency gain over the half-duplex counterpart in closed-form. In general, due to the relatively low user training power\footnote{In current systems, the transmission power of users is usually limited due to lower power budget compared to a base station. }, we show that inter-node interference only leads to a limited downlink rate reduction during training.

\item The spectral efficiency predicted by the closed-form results are further verified through simulations based on channel data from experiment~\cite{everett2015measurement} and 3GPP 3D channel model~\cite{3gpp.36.873}. For example, in a typical $1.4$ MHz LTE system with a block length of around $500$ symbols, the proposed strategy demonstrates a spectral efficiency improvement of 130\% and 8\% over its half-duplex counterpart, for an $8 \times 8$ multiuser MIMO system with closed and open loop training, respectively.
\end{itemize}}
The rate loss due to imperfect CSI with different types of training has been studied in~\cite{caire2010multiuser}. In~\cite{kobayashi2011training}, the authors characterize the optimal training duration and its associated spectral efficiency for half-duplex systems. User selection~\cite{yoo2007multi,dimic2005downlink} has been proposed to reduce the number of training symbols needed by selecting users with a larger distance in channel space. Our analysis has two main differences from prior research. First, we study how to utilize the full-duplex operation to obtain gains in spectral efficiency. Second, the influence of limited training power at the mobile user is modeled and examined throughout this paper. 
 
We first proposed to utilize the training time in systems composed of both full-duplex base station and \textsl{full-duplex} mobile in~\cite{du2014mimo} and~\cite{du2015SPAWC}. { In this paper, we consider a system comprising a full-duplex base station and only \textsl{half-duplex} users.  }

The remainder of this paper is structured as follows. Section~\ref{sec:SysModel} describes the system model. Then the \textsl{sequential beamforming} strategy is proposed in Section~\ref{sec:PrbFoM}. The optimal training duration is studied in Section~\ref{sec:optLen} for systems with both closed and open loop training. The associated spectral efficiency is then presented in Section~\ref{sec:optSE} with both theoretical analysis and experimental data validation. High $\snr$ analysis is provided in Section~\ref{sec:HiSNRAna} to evaluate the proposed strategy. We conclude this paper by summarizing the main results in Section~\ref{sec:conclude}. 
\section{System Model}\label{sec:SysModel}

{  We consider a symmetric multiuser MIMO downlink system consisting of an $M$-antenna full-duplex base station and $M$ single-antenna non-cooperative half-duplex users.} The base station aims at delivering downlink data to each user.  Albeit sub-optimal, base station adopts zero-forcing (ZF) beamforming~\cite{spencer2004zero} for simultaneous transmission to multiple users. In ZF, the base station projects the signal intended for one user to the null space of the others. Thus, if perfect CSI is available, each user only receives the expected signal without interference.
 
Since CSI is obtained from finite training, it is almost always inaccurate and thus results in inter-beam interference for ZF transmissions. During the full-duplex training, the downlink data is communicated in the same band as the training signals sent by users, thus receiving users also suffer from \textsl{inter-node interference}. In this paper, we quantify the impact of inter-node interference on spectral efficiency for training-based ZF strategy.  
\begin{figure}[htbp]
\centering
\includegraphics[height=3cm]{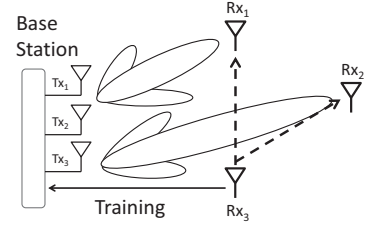} 
\caption{ A schematic of the interference in a $3 \times 3$ multiuser MIMO downlink system when User $3$ sends training and others receive downlink data. The receiving users suffer inter-beam interference (side lobes) due to imperfect CSI. The receiving users also incur inter-node interference (dashed lines) resulting from User $3$'s training. Since users are half-duplex nodes, User $3$ does not receive while it is sending the training signal. }
\label{fig:interference}
\end{figure}  
When User $k$ sends training symbols and the base station transmits downlink data to User $1,2,...,k-1$, the received signal of User $i$ is immediately captured as{ 
\begin{equation}\label{eq:SysModel} 
y_{i}=\textbf{h}_{i}^{*}\textbf{V} \textbf{s}+h_{\mathrm{U_{ik}}}x_{\mathrm{tr}_{k}}+n_{i}  ,\quad i=1,\cdots,k-1.
\end{equation}
 Here $\textbf{h}_{i} \in \mathcal{C}^{1\times M}$ and $h_{\mathrm{U_{ik}}}$ stand for the channel realization from User $i$ to the base station and User $k$, respectively.} {  In this paper, the coherence time length is $T^{\mathrm{coherence}}$ symbols where the channel stays unchanged. Moreover, the block length $T$ is the number of symbols for uplink training and downlink data transmission.} We assume a Rayleigh block fading environment, i.e., each element of $\textbf{h}_{i}$ and $h_{\mathrm{U_{ik}}}$ is independently complex Gaussian distributed from block to block. 
 
The term $\textbf{s}\in \mathcal{C}^{k-1 \times 1}$ is the actual signal intended to User $1,2,\cdots,k-1$ and $\textbf{V}=[\textbf{v}_1, ..., \textbf{v}_{k-1}] \in \mathcal{C}^{M \times k-1}$ represents the ZF precoding matrix generated based on the quantized (estimated) CSI of users, which is presented as $\hat{\mathbf{h}}_i, i=1,2,..,k-1$. The precoded symbol is then $\textbf{V} \textbf{s}$, which is constrained to an average power constraint of $P$. We consider equal power allocation among symbols, i.e., $\mathbb{E}\left[|\textbf{v}_i s_i|^2\right]=P/M, \forall i$. 

If only imperfect CSI is available, the inter-beam interference is non-zero. The signal and the inter-beam interference both are contained in term $\textbf{h}_{i}^{*}\textbf{V} \textbf{s}$. Term $x_{\mathrm{tr}_{k}}$ is the uplink training symbol sent by User $k$. To account for the limitations of both battery and size of user devices, we consider a more strict average power constraint for users, which is described as $\mathbb{E}[|x_{\mathrm{tr}_{k}}|^2]\leqslant fP,\ f\in (0,1]$. Term $h_{\mathrm{U_{ik}}}x_{\mathrm{tr}_{k}}$ captures the inter-node interference. { We assume inter-node interference power to be proportional to the training power $fP$, i.e., it grows as $|h_{\mathrm{U_{ik}}}x_{\mathrm{tr}_{k}}|^2\sim\alpha fP$, where $\alpha=\frac{|h_{\mathrm{U_{ik}}}x_{\mathrm{tr}_{k}}|^2}{|x_{\mathrm{tr}_{k}}|^2}>0$ and $fP=|x_{\mathrm{tr}_{k}}|^2>0$.} The signal is degraded by an independent unit variance additive white complex Gaussian noise $n_{i}$.

We assume that each User $i$ has perfect knowledge of the downlink channel from the base station to itself $\textbf{h}_{i}$ by estimating downlink pilots broadcast by base station antennas before any uplink training or downlink data transmission. {  However, the base station is required to obtain CSI by either closed or open loop training. The base station is assumed to be an $M-$antenna full-duplex MIMO array~\cite{bharadia2014full}. We assume that self-interference due to the full-duplex operation at the base station is reduced to near-noise floor by the state-of-art circulators and filter-based self-interference cancellation~\cite{bharadia2014full}. Since the same antennas are used for both transmission and reception, channel reciprocity also holds between uplink and downlink channels. Unlike~\cite{du2014mimo,du2015SPAWC}, the proposed sequential beamforming strategy needs only half-duplex users.} Thus there is no self-interference at mobile nodes.  
\section{Sequential Beamforming} \label{sec:PrbFoM} 
In this section, we propose a sequential beamforming strategy that leverages the full-duplex capability at the base station to send downlink data during CSI collection. First, we describe the sequential beamforming strategy in Section~\ref{sec:ABprop}. Then we characterize the influence of inter-node and inter-beam interference on downlink rate.
\subsection{Sequential Beamforming Strategy}\label{sec:ABprop}
{  The proposed strategy is referred to as \textsl{sequential beamforming}. In sequential beamforming, pre-scheduled users send their channel state information in orthogonal time slots. And as the base station receives a particular user's information, it starts data transmission to that user.} Thus, unlike the half-duplex system, the base station does not wait for all the users to send their channel feedback. As noted before, the proposed strategy only requires the base station to be full-duplex and all the mobiles can be half-duplex.  A sequential beamforming strategy with total $T^{\tr}$ training symbols from all users is described as follows:
\begin{enumerate}
\item At the beginning of each block, no downlink data transmission is performed due to the lack of CSI knowledge. From Symbol $1$ to Symbol $\frac{T^{\tr}}{M}$, User $1$ sends\footnote{For a given set of users, the user index $1,2,..., M$ may be randomly assigned in every coherence block to achieve fairness among users.} training symbols to the base station.  We define symbols from Symbol $\left(j-1\right)\frac{T^{\tr}}{M}+1$ to Symbol $j\frac{T^{\tr}}{M}$ to be Cycle $j$ where User $j$ sends its training symbols. 
\item In cycle $j+1$, the base station transmits downlink data based on the updated ZF precoding matrix and beamforms to User $1,2,.., j$ whose training symbols are collected over the previous $j$ cycles. Users who have finished training, i.e., User $1,2, \cdots, j$, begin receiving downlink data. All receiving users decode the received signal by treating interference (both inter-beam and inter-node interference) as noise.
\item Repeat 2) till the end of $T^{\tr}$ symbols. The above full-duplex training part is referred to as \textsl{training phase}.
\item After all training is collected, only downlink data transmission takes place. This part is referred to as \textsl{half-duplex phase}. Fig.~\ref{fig:AB} provides an illustration of sequential beamforming. 
\end{enumerate}
\begin{figure*}[htbp]
\centering
\includegraphics[height=3cm]{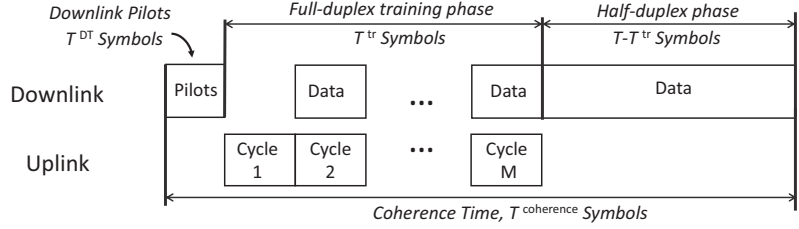}
\caption{{   A depiction of sequential beamforming: Users first obtain CSI by estimating downlink training pilots broadcast by base station antennas in the first $T^{DT}$ symbols. Training cycles constitute $T^{\tr}$ symbols and half-duplex phase occupies the rest of $T-T^{\tr}$ symbols. At the end of training cycles, the base station updates its precoding vectors and serves all users whose CSI has been collected. The whole coherence time length is $T^{\mathrm{coherence}}$ symbols and parameter $T$ is the total number of symbols for uplink training and downlink data transmission.}}
\label{fig:AB}
\end{figure*}   

We will compute the overall \textsl{spectral efficiency ($\se$)} for both phases as  
\begin{equation} 
\se_{\os}=\frac{T^{\tr} }{MT}\sum_{j=2}^{M}\frac{1}{M}\sum_{i=1}^{j-1}R\left(i,j\right)+\frac{T-T^{\tr}}{T}R^{\data}.
\label{equ:SEos}
\end{equation} 
The first and second terms capture the spectral efficiency achieved during and after training, respectively. Rate expression $R\left(i,j\right)$ stands for the downlink rate achieved by User $i$ during cycle $j$. Moreover, $R^{\data}$ is the rate achieved after training, i.e., during half-duplex phase. By only considering the second term in Eq.~\eqref{equ:SEos}, the spectral efficiency of half-duplex counterpart is immediate as
\begin{equation} 
\se_{\hd}=\frac{T-T^{\tr}}{T}R^{\data}.
\label{equ:SEHD}
 \end{equation}
Our objective is to maximize the downlink spectral efficiency. We first quantify the influence of both inter-node and inter-beam interference on $R\left(i,j\right)$ and $R^{\data}$ in Section~\ref{subsec:INIcha} for further analysis. During training, $i-1$ users are served on downlink in Cycle $i>1$. The base station uses power budget $P/M$ to serve each receiving user. 

In this paper, the performance of the following four systems is examined: sequential beamforming strategy with closed and open loop training, half-duplex with closed and open loop training. We differentiate between sequential beamforming and half-duplex systems through the subscripts $\os$ and $\hd$, respectively. The strategy is further detailed by the training type used by the system through another subscript $\mathrm{Cl}$ for closed loop and $\mathrm{Op}$ for open loop training. The superscript is used to denote the system status.  For example, $\se_{\osc}$ stands for the spectral efficiency of a system adopting sequential beamforming strategy with closed loop training.   
\subsection{Rate Performance with Inter-beam and Inter-node Interference}  \label{subsec:INIcha}
To optimize the spectral efficiency of the proposed sequential beamforming strategy, we now quantify the influence of inter-node and inter-beam interference on downlink rate. 
  
  In ZF beamforming, $\textbf{v}_{i}$ is chosen to be orthogonal to other users' channel realization, i.e., $|\textbf{v}_{i}\textbf{h}_{j}|=0 ,j\neq i$. In a genie-aided system where perfect CSI is available, the base station beamforms to users without training and each user receives downlink data at rate $R^{\zf}$ as
{  
\begin{equation}
R^{\zf}=\mathbb{E}\left[\log_{2}\left(1+\frac{P}{M} |\textbf{h}_{i}^{*} \textbf{v}_{i}|^2\right)\right]. \label{equ:R_zf}
\end{equation} } 
Rate~\eqref{equ:R_zf} can be viewed as an upper bound for all strategies that employ ZF, since neither training overhead nor inter-beam interference is included. When User $k$ sends the training symbols, the received $\sinr$ of User $i$ ($i<k$) is decided by both inter-beam and inter-node interference, which can be mathematically expressed as
\begin{equation}
\sinr_{i}= 
\frac{ |\textbf{h}_{i}^*\textbf{v}_{i}|^2\frac{P}{M}}{1+\sum_{j\neq i}\frac{P}{M}|\textbf{h}_{i}^*\textbf{v}_{j}|^2+|h_{\mathrm{U_{ik}}}x_{\mathrm{tr}_{k}}|^2} ,\quad i=1,\cdots,k-1 . \label{equ:sinr}
\end{equation}
{  By adopting a Gaussian input, User $i$ with $\sinr_{i}$ can achieve a longer term average rate (over both fading and training error) of
\begin{equation*}
R_{i}= \mathbb{E}\left[\log_{2}\left(1+\sinr_{i}\right)\right] ,\quad i=1,\cdots,k-1.
\end{equation*}}
  We now characterize the downlink rate $R\left(i,j\right)$ with a unified lower bound that is independent of cycle number. In Cycle $k$ of the training phase, the base station beamforms to Users $1,..., k-1$. Each receiving user suffers inter-beam interference from the signals intended to the other $k-2$ users. Our lower bound assumes that users receive additional inter-beam interference from signal intended to other $M-k$ users. Thus, in total, each receiving user suffers inter-beam interference coming from signals to $M-1$ users instead of $k-2$ users. The downlink rate of the receiving users in this scenario is denoted as $R^{\tr}$, which is detailed as
\begin{equation}
R\left(i,j\right)\geqslant R^{\tr}\left(T^{\tr}\right),\quad j=2,...,M,\quad i=1,..., j.
\label{equ:R_tr_LoBd} 
\end{equation}
This lower bound is the same for all the receiving users in all cycles during the training phase. Therefore, the rate expression of~\eqref{equ:SEos} reduces to
\begin{equation} 
\se_{\os}\geqslant\frac{M-1}{2M}\frac{T^{\tr}}{T} R^{\tr}\left(T^{\tr}\right)+\left(1-\frac{T^{\tr}}{T}\right)R^{\data}\left(T^{\tr}\right),
\label{equ:SEosDeatiled} 
\end{equation}
here $T^{\tr}$ is the training duration of sequential beamforming strategy. Comparing to~\eqref{equ:SEHD}, we find in sequential beamforming strategy, on average, the each user utilizes $\frac{M-1}{2M}$ fraction of the training time also to receive downlink data while other users send training signal. 

{  Since the base station is assumed to perform perfect self-interference cancellation, the influence of inter-beam interference is a function of the training method, power, and duration, which is characterized in~\cite{caire2010multiuser}. We now present an extended lemma that quantifies the rate loss due to inter-beam and inter-node interference.} Following the notations in~\cite{caire2010multiuser}, $\Delta R^{\tr}$ and $\Delta R^{\data}$  denote the upper bound of rate gap (compared to perfect zero-forcing) during and after training, respectively.
\begin{Lemma}~\label{lem:INI}
In all cycles of training phase, the downlink data transmission rate of the receiving users, when another user is sending the training symbols are lower bounded as
\begin{align}
R^{\tr}\left(T^{\tr}\right)\geqslant& R^{\zf}- \log\left( \frac{1+ \mathcal{P}_{\mathrm{IBI}}\left(T^{\tr}\right)+\alpha f P}{1 +\frac{\alpha f P}{1+\frac{P}{M}}}\right) \notag \\
=& R^{\zf}-\Delta R^{\tr}\left(T^{\tr}\right), \notag 
\end{align}  
where $ \mathcal{P}_{\mathrm{IBI}}=P(1+fP)^{-\frac{T^{\tr}}{M\left(M-1\right)} }$ and $\frac{P}{M}\frac{M-1}{1+\frac{T^{\tr}}{M} f P}$ for closed and open loop training, respectively.
\end{Lemma}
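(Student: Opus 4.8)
The plan is to bound the gap $R^{\zf}-R^{\tr}$ rather than $R^{\tr}$ directly. Writing the signal power as $S=\frac{P}{M}|\mathbf{h}_i^*\mathbf{v}_i|^2$, the aggregate inter-beam interference (over the worst-case $M-1$ interferers assumed before the lemma) as $I=\sum_{j\neq i}\frac{P}{M}|\mathbf{h}_i^*\mathbf{v}_j|^2$, and the inter-node interference as the deterministic quantity $N=\alpha f P$, the definitions in~\eqref{equ:R_zf} and~\eqref{equ:sinr} give $1+\sinr_i=(1+S+I+N)/(1+I+N)$, so that
\begin{equation*}
R^{\zf}-R^{\tr}=\mathbb{E}\left[\log\left(\frac{(1+S)(1+I+N)}{1+S+I+N}\right)\right].
\end{equation*}
The first move I would make is to drop the nonnegative $I$ from the denominator alone, which only enlarges the fraction and produces the key inequality
\begin{equation*}
\frac{(1+S)(1+I+N)}{1+S+I+N}\leqslant\frac{1+I+N}{1+\frac{N}{1+S}} .
\end{equation*}
This already exhibits the characteristic denominator $1+\frac{N}{1+S}$ of the claimed $\Delta R^{\tr}$.

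I would then split the logarithm and bound the two pieces in opposite directions. For the numerator piece $\mathbb{E}[\log(1+I+N)]$, concavity of the logarithm and Jensen's inequality give the upper bound $\log(1+\mathbb{E}[I]+N)$, and substituting the inter-beam interference mean $\mathbb{E}[I]=\mathcal{P}_{\mathrm{IBI}}$ together with $N=\alpha f P$ yields the numerator $1+\mathcal{P}_{\mathrm{IBI}}+\alpha f P$. For the subtracted piece $\mathbb{E}[\log(1+\frac{N}{1+S})]$ I need the reverse inequality, so I would instead use that $S\mapsto\log(1+\frac{N}{1+S})$ is convex (its second derivative in $S$ is positive for $N\geqslant 0$); Jensen then runs the other way and gives the lower bound $\log(1+\frac{N}{1+\mathbb{E}[S]})$. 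Invoking the ZF signal normalization $\mathbb{E}[S]=P/M$ — which holds because $\mathbf{v}_i$ is a unit vector and, conditioned on the other channels, $\mathbf{h}_i^*\mathbf{v}_i$ is a unit-variance projection — turns this into $1+\frac{\alpha f P}{1+P/M}$, exactly the denominator in the statement. Reassembling the two bounds reconstructs $\Delta R^{\tr}$ and hence the claimed lower bound on $R^{\tr}$.

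The remaining ingredient is the interference mean $\mathbb{E}[I]=\mathcal{P}_{\mathrm{IBI}}$ for each training type, for which I would invoke the imperfect-CSI accounting of~\cite{caire2010multiuser}. For closed-loop training the rate-distortion-optimal quantization error decays geometrically in the per-dimension feedback budget, giving $\mathcal{P}_{\mathrm{IBI}}=P(1+fP)^{-T^{\tr}/(M(M-1))}$; for open-loop training the MMSE channel-estimation error scales as the reciprocal of the accumulated pilot SNR, giving $\frac{P}{M}\frac{M-1}{1+\frac{T^{\tr}}{M}fP}$. Both follow the bookkeeping that $T^{\tr}/M$ symbols are spent per cycle and the residual error is distributed over the $M-1$ nulled dimensions.

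I expect the main obstacle to be the subtracted INI term. The inequality needed there runs against the usual concavity-of-log direction, so one must recognize that the right tool is convexity in the signal power $S$ (with $N$ fixed at its deterministic value $\alpha f P$) combined with the normalization $\mathbb{E}[S]=P/M$. Keeping the two Jensen steps pointed the right way — an upper bound on the combined interference term and a lower bound on the INI-only term — and cleanly separating the deterministic inter-node interference from the random inter-beam interference is the crux; the $\mathcal{P}_{\mathrm{IBI}}$ characterization itself is essentially imported from prior work.
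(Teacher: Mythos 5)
Your proposal is correct and takes essentially the same route as the paper's own proof: your step of dropping the nonnegative inter-beam interference $I$ from the denominator reproduces exactly the paper's bound \eqref{equ:PrDeR} (the ``recipe'' it imports from Appendix II of~\cite{caire2010multiuser}), and your two Jensen steps — concavity for the $\log(1+I+N)$ piece, and convexity of $s\mapsto\log\bigl(1+\tfrac{N}{1+s}\bigr)$ together with $\mathbb{E}[S]=P/M$ for the subtracted piece — are precisely the Jensen applications the paper invokes by citing Remark 4.2 of~\cite{caire2010multiuser}, with the $\mathcal{P}_{\mathrm{IBI}}$ expressions likewise inherited from that reference. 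The only difference is expository: you make explicit the convexity direction and the signal-power normalization that the paper leaves implicit in its citation.
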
 
\begin{proof}
See Appendix~\ref{App:INI} for detail.
\end{proof}

In the rate gap term $\Delta R^{\tr}$, inter-beam and inter-node interference are reflected through terms $ \mathcal{P}_{\mathrm{IBI}}$ and $\alpha f P$, receptively. If more training symbols are sent, $ \mathcal{P}_{\mathrm{IBI}}$ decreases. This decrease is because that the base station has better CSI estimates, which leads to less inter-beam interference. The inter-node interference term $\alpha fP$ does not change during the whole training phase. It is emphasized that the lower bound present in Lemma~\ref{lem:INI} is independent of the user index and cycle index, due to the use of Eq.~\eqref{equ:R_tr_LoBd}.

{  Many recent works have helped raise the level of self-interference cancellation~\cite{bharadia2014full,duarte2012experiment,Jain2011practicalFD,everett2015measurement,aryafar2012midu,yin2013full,ahmed2013self}. Hence in this work, we assume that full-duplex MIMO array with perfect self-interference cancellation is possible for the base station, which has a larger footprint than a mobile node.  This assumption allows us to focus on characterizing the tradeoff between inter-beam and inter-node interference. Lemma $1$ and other analysis in this paper can be extended to system models that include the impact of limited self-interference cancellation by substituting $fP$, which is the effective uplink training $\snr$ in $ \mathcal{P}_{\mathrm{IBI}}$, as $\frac{fP}{1+ P_{\mathrm{SI}} }$ . Here $P_{\mathrm{SI}}$ is the power of residual self-interference interference. Simulation results for systems with self-interference are provided in Section~\ref{sec:optSE}.}

As $T^{\tr} \to \infty$, the rate loss due to inter-beam interference vanishes and rate gap bound becomes $\log\left( \frac{1+\alpha f P}{1 +\frac{\alpha f P}{1+P/M}}\right)$, which stands for the influence of inter-node interference and is noted as $\Delta R^{\ini}$.   Term $\Delta R^{\ini}$ is as a constant rate loss caused by inter-node interference during the training phase. We will study the impact of this term in the following analysis. Even when $\alpha \to \infty$, the rate loss term is still upper bounded by $\log\left(1+P/M\right)$, which is obviously finite.  This finite rate loss suggests that positive downlink rate gain can still be achieved asymptotically under the influence of inter-node interference, which is later confirmed in Section~\ref{sec:HiSNRAna}.  
 
After training, each user continues to receive data until the end of the block. Thus, only the effect of inter-beam interference exists. We can conveniently obtain the rate expression $R^{\data}$ by setting $\alpha=0$ in Lemma~\ref{lem:INI}, which characterizes inter-beam interference with the help of~\cite{caire2010multiuser}.
\begin{Proposition}\label{pop:IBI}
The downlink transmission rate of User $i$ after training is lower bounded by
\begin{equation}
 R^{\data}\geqslant R^{\zf}-\Delta R^{\data}= R^{\zf}- \log\left( 1+ \mathcal{P}_{\mathrm{IBI}}\right),
\end{equation} 
where $ \mathcal{P}_{\mathrm{IBI}}=P(1+fP)^{-\frac{T^{\tr}}{M\left(M-1\right)} }$ and $\frac{P}{M}\frac{M-1}{1+\frac{T^{\tr}}{M} f P}$ for closed and open loop training, respectively.
 \end{Proposition}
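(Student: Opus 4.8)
The plan is to obtain this result as a direct corollary of Lemma~\ref{lem:INI} by specializing it to the half-duplex phase. First I would observe that once training is complete the base station serves all $M$ users and no user transmits any training symbols, so the received signal of User $i$ contains only the desired beam and the inter-beam interference from the other beams, with no inter-node contribution. Concretely, in the $\sinr$ expression of Eq.~\eqref{equ:sinr} the inter-node term $|h_{\mathrm{U_{ik}}}x_{\mathrm{tr}_{k}}|^2$ is now absent; since this term was modeled as $\alpha fP$, its absence is exactly encoded by setting the inter-node coefficient $\alpha=0$, as anticipated in the discussion preceding the statement.

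Next I would verify that the interference \emph{structure} of the half-duplex phase coincides with the worst-case assumption already built into Lemma~\ref{lem:INI}. In that lemma each receiving user is charged inter-beam interference from all $M-1$ other beams, which was a conservative over-count during training (where only the served subset is active). After training, every user is genuinely interfered by the remaining $M-1$ beams, so the bound becomes exact rather than conservative in its beam count. Because the interference population matches, the identical chain of inequalities that proves Lemma~\ref{lem:INI} — in particular the characterization of the expected inter-beam interference power $\mathcal{P}_{\mathrm{IBI}}$ for closed and open loop training via~\cite{caire2010multiuser} — carries over verbatim, and no new estimate on $\mathcal{P}_{\mathrm{IBI}}$ is needed.

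The final step is purely algebraic: with $\alpha=0$ the rate-gap logarithm of Lemma~\ref{lem:INI} collapses, since the numerator reduces to $1+\mathcal{P}_{\mathrm{IBI}}$ and the denominator to $1$, i.e.
\begin{equation*}
\log\left(\frac{1+\mathcal{P}_{\mathrm{IBI}}+\alpha fP}{1+\frac{\alpha fP}{1+\frac{P}{M}}}\right)\Bigg|_{\alpha=0}=\log\left(1+\mathcal{P}_{\mathrm{IBI}}\right),
\end{equation*}
which delivers $R^{\data}\geqslant R^{\zf}-\log(1+\mathcal{P}_{\mathrm{IBI}})$ with the same $\mathcal{P}_{\mathrm{IBI}}$ for each training type. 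Since all the genuine analytical effort lives in the proof of Lemma~\ref{lem:INI}, I expect no real obstacle here; the only point requiring care is confirming that the half-duplex beam count indeed equals the $M-1$ interferers assumed in the lemma, so that specializing to $\alpha=0$ loses neither validity nor tightness.
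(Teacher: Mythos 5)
Your proposal is correct and follows exactly the paper's own route: the paper obtains Proposition~\ref{pop:IBI} precisely by setting $\alpha=0$ in Lemma~\ref{lem:INI}, since after training only inter-beam interference remains and the rate-gap logarithm collapses to $\log\left(1+\mathcal{P}_{\mathrm{IBI}}\right)$. Your extra check that the $M-1$ interfering beams assumed in the lemma match the actual half-duplex-phase interference population is a sensible (if brief) addition, but it does not change the argument.
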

 Similar to the influence of inter-beam interference during training, we find that the influence of inter-beam interference also decreases as training symbols amount increases.
 
 We assume perfect CSI at users by estimating pilots broadcast by each base station antenna. The rate loss due to imperfect CSI at users~\cite{caire2010multiuser} is upper bounded by 
$$\Delta R^{\mathrm{CSIR}} \leqslant \log\left(1+\frac{P/M}{1+\frac{T_{\mathrm{DL}}}{M} P_{\mathrm{DL}}}\right),$$ 
where $T_{\mathrm{DL}}/M$ and $P_{\mathrm{DL}}$ is the number and power of downlink training symbols broadcast by base station antennas.   For example, a system with $M=8$ and $P=15$ dB, with \lq \lq pilots power boost of 3dB\rq \rq \space from~\cite{3gpp.36.213}, when $1$ training pilot is used for each base station antenna,  the rate loss is upper bounded by 0.06 bps/Hz, which is smaller than $3\%$ of the associated $R^{\zf}$. This rate loss is the same for half-duplex and the proposed system. {  Similar to~\cite{caire2010multiuser,kobayashi2011training}, the rate considered in this paper is the {\sl ergodic} rate, which can be achieved by spanning codewords across enough large number of blocks.}
\section{Training Time Optimization} \label{sec:optLen}
In sequential beamforming strategy described in Section~\ref{sec:ABprop}, the base station obtains CSI from uplink training symbols. 
Obtaining the best spectral efficiency performance requires a balance between inter-beam and inter-node interference by optimizing training duration. In this section, we analyze the optimal training duration for sequential beamforming and traditional half-duplex strategy with both closed and open loop training. {  We use the metric of spectral efficiency to characterize the  optimal solution for the proposed strategy.} The optimization 
\begin{equation}\label{eq:OptSta}
T^{\tr*}_{s}=\arg\max \se_{s}\left(T^{\tr}\right),  
\end{equation}
is implemented for $s= \osc$, $\oso$, $\hdc$ and $\hdo $. We use the superscript $*$ to denote optimal solution. The optimality in this paper is under the criterion of maximum spectral efficiency. We also consider $T^{\tr}$ to be continuous. The accent $\widetilde{\quad}$ is used to represent approximation in Sections~\ref{sec:optLenAB},~\ref{sec:optLenHD}, where closed form analytical solutions are not feasible. 
 \subsection{Optimal Training Duration of Sequential Beamforming}\label{sec:optLenAB}
In this subsection, we solve the optimization problem posed in~\eqref{eq:OptSta} by applying a \textsl{Marginal Analysis}~\cite{baumol1977economic} technique. As shown below, the marginal analysis allows accurate closed form approximation for systems with both closed and open loop training.
\begin{Proposition} \label{obs:MCMUAB}
The optimal training duration of sequential beamforming strategy happens at the point where the spectral benefit of adding training symbols equals to loss, i.e.,
\begin{equation}
\frac{\partial \se_{\os}\left(T^{\tr *}\right)}{\partial T^{\tr}}=0.\label{equ:optLenABdef}
\end{equation}
\end{Proposition}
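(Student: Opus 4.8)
The statement is a first-order stationarity condition, so the plan is to verify that $\se_{\os}$, viewed as a function of the continuous variable $T^{\tr}\in[0,T]$, is smooth enough for calculus to apply, that its maximizer lies in the interior of this interval, and that any interior stationary point is in fact the global maximizer. I would work with the tractable objective obtained by substituting the bounds of Lemma~\ref{lem:INI} and Proposition~\ref{pop:IBI} into the right-hand side of~\eqref{equ:SEosDeatiled}, namely
\begin{equation*}
\se_{\os}(T^{\tr}) = \frac{M-1}{2M}\frac{T^{\tr}}{T}\bigl(R^{\zf} - \Delta R^{\tr}(T^{\tr})\bigr) + \Bigl(1 - \frac{T^{\tr}}{T}\Bigr)\bigl(R^{\zf} - \Delta R^{\data}(T^{\tr})\bigr),
\end{equation*}
so that all $T^{\tr}$-dependence is carried by the affine time-fraction factors and by $\mathcal{P}_{\mathrm{IBI}}(T^{\tr})$ inside the two rate-gap logarithms.

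The key steps, in order, would be as follows. First I would establish that $\se_{\os}$ is continuously differentiable on $(0,T)$; this is immediate because $\mathcal{P}_{\mathrm{IBI}}(T^{\tr})$ is a smooth, strictly positive, strictly decreasing function of $T^{\tr}$ in both the closed-loop (exponential) and open-loop (rational) cases, and the logarithms act on arguments bounded away from zero. Second, following the marginal-analysis viewpoint, I would split $\frac{\partial \se_{\os}}{\partial T^{\tr}}$ into a marginal benefit term, combining the rate accrued by serving users during the additional training slot with the improvement of $\Delta R^{\tr}$ and $\Delta R^{\data}$ driven by the decay of $\mathcal{P}_{\mathrm{IBI}}$, and a marginal loss term, the shrinking of the half-duplex fraction $(1 - T^{\tr}/T)$. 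Third, I would examine the endpoints: at $T^{\tr}\to 0$ the CSI is uninformative ($\mathcal{P}_{\mathrm{IBI}} = P$) so the gain from the first training symbols dominates and the derivative is positive, ruling out $T^{\tr}=0$; at $T^{\tr}=T$ the half-duplex phase vanishes while $\mathcal{P}_{\mathrm{IBI}}$ is already small, so further training only removes productive data time and the derivative is negative, ruling out $T^{\tr}=T$. These two sign facts force the maximizer into the interior, where~\eqref{equ:optLenABdef} is necessary.

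The final and hardest step is to promote this necessary stationarity condition to a genuine characterization of the unique optimum, i.e. to show the interior stationary point is the global maximizer; I expect this to be the main obstacle, since $\se_{\os}$ is a sum of products of an affine factor in $T^{\tr}$ with a logarithm of a composite function and its second derivative has no obviously signed form. The route I would take is to exploit that $\mathcal{P}_{\mathrm{IBI}}(T^{\tr})$ is convex and decreasing in both training models, which makes each rate gap $\Delta R^{\tr}$ and $\Delta R^{\data}$ a convex decreasing function of $T^{\tr}$ (a composition of $\log(1+\cdot)$ with a convex decreasing argument, the additive inter-node terms being constant in $T^{\tr}$); from this one argues that the marginal benefit is decreasing while the marginal loss is nondecreasing, so $\frac{\partial \se_{\os}}{\partial T^{\tr}}$ crosses zero exactly once. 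This single-crossing property, together with the boundary signs above, yields unimodality and hence that the stationary point of~\eqref{equ:optLenABdef} is the global maximizer. If full concavity proves too strong to establish in closed form, the fallback is to verify the single-crossing of $\partial\se_{\os}/\partial T^{\tr}$ directly for the two explicit $\mathcal{P}_{\mathrm{IBI}}$ expressions, treating the closed- and open-loop cases separately.
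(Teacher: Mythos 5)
Your proposal follows the same route as the paper's own proof: a marginal-analysis argument in which a monotonically decreasing marginal benefit of training is balanced against a monotonically increasing marginal loss, yielding unimodality of $\se_{\os}$ in $T^{\tr}$ and hence a unique optimum characterized by the first-order condition. The differences are ones of rigor, and they cut in your favor. First, the paper simply asserts concavity from the qualitative monotonicity statements tied to Lemma~\ref{lem:INI} and Proposition~\ref{pop:IBI}; you ground the diminishing-returns claim in the verifiable fact that $\mathcal{P}_{\mathrm{IBI}}$ is convex and decreasing in both training models (which indeed makes $\Delta R^{\tr}$ and $\Delta R^{\data}$ convex and decreasing, for both the exponential closed-loop form and the rational open-loop form), and you correctly flag that this still does not immediately give concavity of $\se_{\os}$, since the objective is a sum of products of affine time-fraction factors with concave rate terms; your single-crossing fallback is the honest repair of exactly the step the paper glosses over. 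Second, and more substantively, you add the endpoint analysis (derivative positive as $T^{\tr}\to 0$, negative at $T^{\tr}=T$) that the paper omits entirely. That step is not optional: concavity alone permits the maximizer to sit at a boundary of $[0,T]$, where the derivative need not vanish, so without the endpoint signs the stationarity condition~\eqref{equ:optLenABdef} is not a valid characterization of the optimum. In short, your proposal is the paper's argument, completed.
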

\begin{proof}
Since the mobile nodes are half-duplex, more training implies less time for downlink data reception. The influence of inter-beam interference on the rate in Proposition~\ref{pop:IBI} suggests that the increase in rate with respect to training increase is monotonically decreasing. Combining with the facts above, we conclude that the benefit in spectral efficiency from $M$ additional training symbols is monotonically decreasing as training grows. From Lemma~\ref{lem:INI}, the influence of longer inter-node interference duration is monotonically increasing. Thus, the spectral efficiency $\se_{\os}$ is concave in $T^{\tr}$. Therefore, a unique $T^{\tr *}_{\os}$ exists to optimize the spectral efficiency.  
\end{proof}

Applying Taylor's expansion to~\eqref{equ:optLenABdef} and ignoring all the expansion terms yield  
\begin{equation}\label{equ:optLenABcom}
\se_{\os}\left(\widetilde{T}^{\tr *}_{\os}\right)\approx \se_{\os}\left(\widetilde{T}_{\os}^{*}+M\right).
\end{equation}
With the help of spectral efficiency characterization provided in Lemma~\ref{lem:INI}, expanding both sides of~\eqref{equ:optLenABcom} leads to
\begin{align}
&\frac{M-1}{2M}\frac{\widetilde{T}^{\tr}_{\os}}{T}\left[R^{\tr}\left(\widetilde{T}^{\tr}_{\os}+M\right)-R^{\tr}\left(\widetilde{T}^{\tr}_{\os}\right)\right]\notag\\ 
&+\frac{T-\widetilde{T}^{\tr}_{\os}}{T}\left[R^{\data}\left(\widetilde{T}^{\tr}_{\os}+M\right)-R^{\data}\left( \widetilde{T}^{\tr}_{\os} \right) \right]\notag \\
=&\frac{M-1}{2T}\left[R^{\data}\left(\widetilde{T}^{\tr}_{\os}+M\right)-R^{\tr}\left(\widetilde{T}^{\tr}_{\os}+M\right)\right]\notag\\ 
&+\frac{M+1}{2T} R^{\data}\left(\widetilde{T}^{\tr}_{\os}+M\right). \label{equ:optABcom1} 
\end{align}
The left side in~\eqref{equ:optABcom1} is the benefit obtained in spectral efficiency by adding $M$ training symbols. We note this benefit as \textsl{Marginal Utility} (MU).  The MU comes from the fact: more training can reduce inter-beam interference both during and after training, which corresponds to the first and second term on left side of~\eqref{equ:optABcom1}, respectively.

More training symbols results in lower inter-beam interference in half-duplex phase. By using Proposition~\ref{pop:IBI}, it is expressed as a rate increase of
\begin{align}
R^{\data}&\left( \widetilde{T}^{\tr}_{\os}+M\right)-R^{\data}\left( \widetilde{T}^{\tr}_{\os} \right)\notag\\&=\log\left(1+\frac{   \mathcal{P}_{\mathrm{IBI}}\left(\widetilde{T}^{\tr}_{\os}\right)  - \mathcal{P}_{\mathrm{IBI}}\left( \widetilde{T}^{\tr}_{\os}+M\right)  }{1+ \mathcal{P}_{\mathrm{IBI}}\left( \widetilde{T}^{\tr}_{\os}+M\right)  } \right).\label{equ:deltaRibi}
\end{align}
We refer the rate improvement due to less inter-beam interference as $ \delta R^{\data}\left(\widetilde{T}^{\tr}_{\os}\right)$. In the same spirit, the rate increase of $R^{\tr}$ by lower inter-beam interference during training phase is 
\begin{equation}
R^{\tr}\left(\widetilde{T}^{\tr}_{\os}+M\right)-R^{\tr}\left(\widetilde{T}^{\tr}_{\os}\right)
\approx  \delta R^{\data}\left(\widetilde{T}^{\tr}_{\os}\right). \notag
\end{equation}
We find that, the rate improvement due to less inter-beam interference is almost constant during and after training. Applying the two results above, the marginal utility is then
\begin{align}
MU&= \left(1-\frac{M+1}{2M}\frac{\widetilde{T}^{\tr}_{\os}}{T}\right)\delta R^{\data}\left(\widetilde{T}^{\tr}_{\os}\right)\notag \\&\approx  \left(1-\frac{1}{2}\frac{\widetilde{T}^{\tr}_{\os}}{T}\right)\delta R^{\data}\left(\widetilde{T}^{\tr}_{\os}\right), \label{equ:MU}
\end{align}
which suggests a rate increase of $\delta R^{\data}\left(\widetilde{T}^{\tr}_{\os}\right)$ in $ 1-\frac{1}{2}\frac{\widetilde{T}^{\tr}_{\os}}{T} $ fraction of the whole block is achieved by adding $M$ training symbols. Later we further obtain the marginal utility of half-duplex counterparts by the same process with $\se_{\os}$ substituted as $\se_{\hd}$.

On the right side of~\eqref{equ:optABcom1} is the loss of spectral efficiency, referred to as \textsl{Marginal Cost} (MC), due to longer training and comprises two parts. 
The first term corresponds to the fact that additional inter-node interference is suffered in $\frac{M-1}{2}$ of the $M$ symbols. The second term reflects that the rest $\frac{M+1}{2}$ training symbols are still not able to be utilized for downlink. With the help of Lemma~\ref{lem:INI} and Proposition~\ref{pop:IBI}, the rate loss due to additional inter-node interference is 
\begin{align*}
&R^{\data}\left(\widetilde{T}^{\tr}_{\os}+M\right)-R^{\tr}\left(\widetilde{T}^{\tr}_{\os}+M\right)\\
&= \log\left( \frac{1+\frac{\alpha f P}{1+ \mathcal{P}_{\mathrm{IBI}}\left(\widetilde{T}^{\tr}_{\os}+M\right)}}{1 +\frac{\alpha f P}{1+\frac{P}{M}}}\right) \approx  \Delta R^{\ini}.
\end{align*}
The downlink rate loss due to inter-node interference in the training phase is almost independent of the training duration. The downlink rate is immediate as $R^{\zf}$. The marginal cost is then
\begin{equation}
MC=\frac{M-1}{2T}  \Delta R^{\ini}+\frac{M+1}{2T} R^{\zf} \approx \frac{M}{2T} \left[\Delta R^{\ini}+ R^{\zf}\right] \label{equ:MC} ,
\end{equation}
which is independent of training symbol amount. The approximation made in Eq.~\eqref{equ:MC} holds for large $T$. In the training phase, Eq.~\eqref{equ:MC} suggests that, on average, each user receives downlink data during half of the training time.

The unique optimal point $T^{\tr *}_{\os}$ happens at the point where the spectral efficiency benefit (marginal utility) and cost (marginal cost) break even, i.e., $MU=MC$ . Using~\eqref{equ:MU} and~\eqref{equ:MC}, it is mathematically captured as
\begin{align}
\left(1-\frac{M+1}{2M}\frac{\widetilde{T}^{\tr}_{\os}}{T}\right)&\delta R^{\data}\left(\widetilde{T}^{\tr}_{\os}\right)\notag \\
&=\frac{M-1}{2T}  \Delta R^{\ini}+\frac{M+1}{2T} R^{\zf}. \label{equ:optLenABfin} 
\end{align} 
The optimal training duration of sequential beamforming strategy with closed and open loop training is then obtained by the inter-beam interference characterization provided in Proposition~\ref{pop:IBI}. 
\begin{theorem}\label{thm:LenABC}
The approximation of optimal training duration $T_{\osc}^{\tr *}$ of sequential beamforming strategy with closed loop training is 
\begin{equation}
\widetilde{T}_{\osc}^{\tr *}=M\left(M-1\right)\frac{\log\left(\frac{TP}{c}\right)+\log\left(\left(1+fP\right)^{\frac{1}{M-1}}-1\right)}{\log\left(1+fP\right)},
\end{equation}
where $c=\frac{M-1}{2}\Delta R^{\ini}+\frac{M+1}{2} R^{\zf}$.
\end{theorem}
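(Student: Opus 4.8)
The plan is to read Theorem~\ref{thm:LenABC} as nothing more than the explicit solution of the break-even condition already established in Eq.~\eqref{equ:optLenABfin}. By Proposition~\ref{obs:MCMUAB}, $\se_{\os}$ is concave in $T^{\tr}$, so its maximizer is the unique root of $MU=MC$, i.e. of~\eqref{equ:optLenABfin}; hence proving the theorem reduces to solving that equation for $T^{\tr}$ in closed form. The first step is to specialize it to closed loop training by inserting $ \mathcal{P}_{\mathrm{IBI}}\left(T^{\tr}\right)=P\left(1+fP\right)^{-\frac{T^{\tr}}{M\left(M-1\right)}}$ from Proposition~\ref{pop:IBI}, so that the entire $T^{\tr}$-dependence of the left-hand side is carried by this single exponential, while the right-hand side is the constant $c/T$ with $c=\frac{M-1}{2}\Delta R^{\ini}+\frac{M+1}{2}R^{\zf}$.

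The decisive simplification is that $ \mathcal{P}_{\mathrm{IBI}}$ decays geometrically in $T^{\tr}$, so one extra cycle of training scales it by a fixed factor: $ \mathcal{P}_{\mathrm{IBI}}\left(T^{\tr}\right)= \mathcal{P}_{\mathrm{IBI}}\left(T^{\tr}+M\right)\left(1+fP\right)^{\frac{1}{M-1}}$, since $\frac{M}{M\left(M-1\right)}=\frac{1}{M-1}$. Consequently the numerator of $\delta R^{\data}$ in~\eqref{equ:deltaRibi} factors as $ \mathcal{P}_{\mathrm{IBI}}\left(T^{\tr}+M\right)\big(\left(1+fP\right)^{\frac{1}{M-1}}-1\big)$, which already exhibits the factor appearing in the theorem. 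Invoking the moderate-$\snr$ regime, where $ \mathcal{P}_{\mathrm{IBI}}$ at the optimum is small, I would linearize with $\log\left(1+u\right)\approx u$, replace the denominator $1+ \mathcal{P}_{\mathrm{IBI}}\left(T^{\tr}+M\right)$ by $1$, and freeze the slowly varying prefactor $1-\frac{M+1}{2M}\frac{T^{\tr}}{T}$ of the marginal utility~\eqref{equ:MU}. The break-even equation then collapses to
\begin{equation*}
P\left(1+fP\right)^{-\frac{T^{\tr}+M}{M\left(M-1\right)}}\big(\left(1+fP\right)^{\frac{1}{M-1}}-1\big)\approx\frac{c}{T}.
\end{equation*}
Taking logarithms makes $T^{\tr}$ appear linearly; solving gives $\widetilde{T}^{\tr}_{\osc}+M$ equal to the claimed expression, the coefficient $M\left(M-1\right)/\log\left(1+fP\right)$ being exactly the reciprocal of the per-symbol decay rate of $ \mathcal{P}_{\mathrm{IBI}}$, and dropping the additive $M$ yields $\widetilde{T}_{\osc}^{\tr *}$ precisely as stated.

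The main obstacle is controlling this chain of approximations rather than the algebra. The break-even condition is transcendental—$T^{\tr}$ sits both inside the exponential and inside the linear prefactor—so no exact closed form exists, and the accent $\widetilde{\quad}$ marks exactly where lower-order terms are shed: the linearization of $\delta R^{\data}$, the freezing of the marginal-utility prefactor, and the discarded additive $O\!\left(M\right)$ correction. I would therefore spend most of the effort arguing that each discarded quantity is small relative to the dominant term $M\left(M-1\right)\log\left(TP/c\right)/\log\left(1+fP\right)$ in the moderate-$\snr$ regime, so that the displayed formula is a tight approximation of the true maximizer guaranteed by Proposition~\ref{obs:MCMUAB}.
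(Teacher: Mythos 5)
Your proposal is correct and follows essentially the same route as the paper's own proof: both start from the marginal break-even condition~\eqref{equ:optLenABfin}, insert the closed-loop expression $\mathcal{P}_{\mathrm{IBI}}=P(1+fP)^{-T^{\tr}/(M(M-1))}$ from Proposition~\ref{pop:IBI}, linearize $\delta R^{\data}$ via Taylor expansion, discard the slowly varying prefactor $\frac{M+1}{2M}\frac{T^{\tr}}{T}$ because the equation is transcendental, and solve the resulting log-linear equation. Your variant of tracking $\mathcal{P}_{\mathrm{IBI}}(T^{\tr}+M)$ exactly and dropping the additive $M$ at the end differs from the paper's direct use of $\mathcal{P}_{\mathrm{IBI}}(T^{\tr})$ only at the order of the approximations already being made.
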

\begin{proof}  
See Appendix~\ref{App:OptLenABC}.
\end{proof} 
 
Several interesting observations are made here. First, as $T$ grows, for closed loop training based sequential beamforming strategy, the optimal training duration scales as $ \log T $. We later observe similar scaling law for its half-duplex counterpart. This scaling law, to our best knowledge, has not been reported before. Second, as inter-node interference becomes stronger, less training is sent to account for the higher training \emph{cost}. Third, the optimal training symbols amount scales as $ \frac{\log P}{\log\left(1+fP\right)} $ with respect to $P$, which is less than $ \log P $. From Theorem 4 in~\cite{LimitedFb2006Jindal}, we conclude that full multiplexing is not obtained as $P$ grows. Fourth, the number of training symbols increases almost quadratically with respect to the number of users $M$, which lies well with the intuition that training symbols number scales with the number of total channel coefficients.
 
Fig.~\ref{fig:optLenC} provides both optimal training duration and its approximation of sequential beamforming strategy with closed loop training. Since optimal training duration scales as $ \log T $, the fraction of training duration scales as $ \frac{\log T}{T}$ and is further confirmed numerically.
  \begin{figure*}[htbp]
  \centering
    \subfloat[Closed Loop Systems\label{fig:optLenC}]{    
	\includegraphics[scale=1]{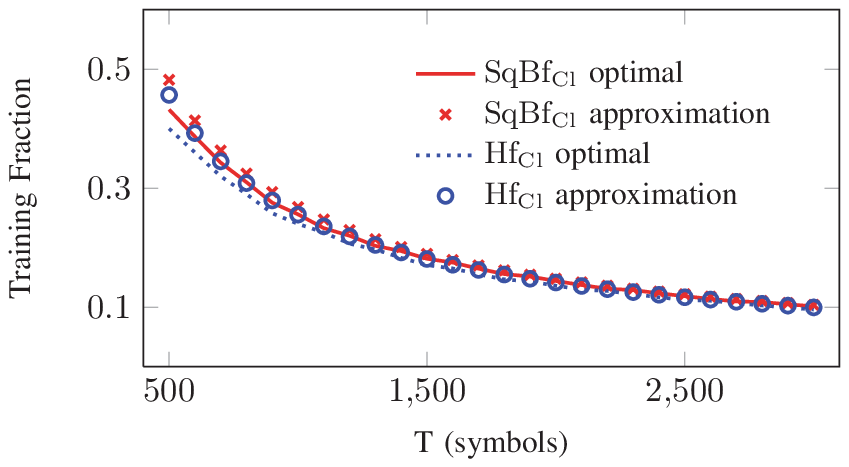} 
    }
    \hfill
    \subfloat[Open Loop Systems\label{fig:optLenO}]{
	\includegraphics[scale=1]{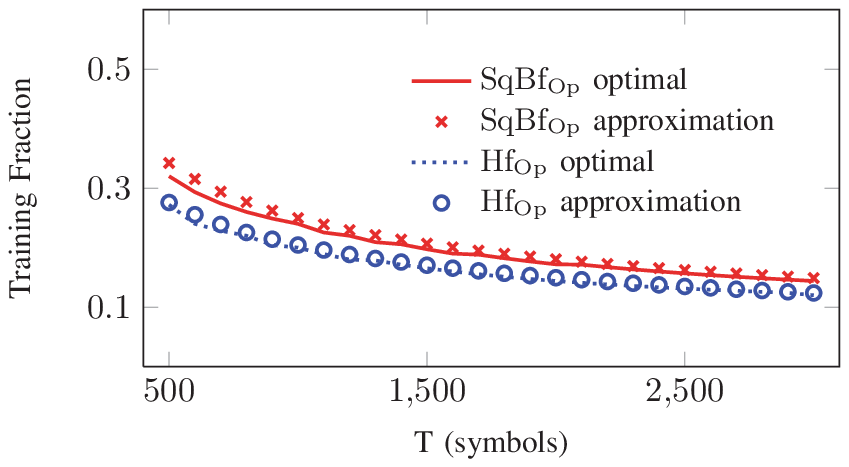} 
    }
 \caption{Optimal training duration fraction of $8\times 8$ sequential beamforming and half-duplex strategy with closed and open loop training at $P=15$ dB with $f=0.1$ and $\alpha=0.3$.}
    \label{fig:optLen} 
  \end{figure*}  
In the same spirit, the optimal training duration of sequential beamforming strategy with open loop training is obtained as below.  
\begin{theorem}\label{thm:LenABO}
The approximation of optimal training duration $T_{\oso}^{\tr *}$ of sequential beamforming strategy with open loop training is 
\begin{equation}
\widetilde{T}_{\oso}^{\tr *}=\sqrt{\frac{\left(M-1\right)T}{f \frac{(M-1)\Delta R^{\ini} +(M+1)R^{\zf}}{2M}}} \approx \sqrt{\frac{\left(M-1\right)T}{f \frac{\Delta R^{\ini} +R^{\zf}}{2}}}. \notag
\end{equation} 
\end{theorem}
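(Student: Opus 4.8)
The plan is to follow exactly the same route as the closed-loop case of Theorem~\ref{thm:LenABC}, starting from the break-even condition~\eqref{equ:optLenABfin} and merely swapping in the open-loop form of $\mathcal{P}_{\mathrm{IBI}}$ from Proposition~\ref{pop:IBI}. Writing $x=T^{\tr}$ for brevity and abbreviating the right-hand side of~\eqref{equ:optLenABfin} as $c/T$ with $c=\frac{M-1}{2}\Delta R^{\ini}+\frac{M+1}{2}R^{\zf}$, the equation to solve is $\left(1-\frac{M+1}{2M}\frac{x}{T}\right)\delta R^{\data}(x)=c/T$. First I would substitute $\mathcal{P}_{\mathrm{IBI}}(x)=\frac{P(M-1)}{M+fPx}$ into the definition of $\delta R^{\data}$ in~\eqref{equ:deltaRibi}.

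The key simplification is that at the optimum $x$ is large, so both $\mathcal{P}_{\mathrm{IBI}}(x)$ and $\mathcal{P}_{\mathrm{IBI}}(x+M)$ are small. This lets me apply $\log(1+u)\approx u$ together with $1+\mathcal{P}_{\mathrm{IBI}}(x+M)\approx 1$, reducing $\delta R^{\data}(x)$ to the finite difference $\mathcal{P}_{\mathrm{IBI}}(x)-\mathcal{P}_{\mathrm{IBI}}(x+M)$. Computing this difference in closed form gives $\frac{M(M-1)fP^2}{(M+fPx)(M+fP(x+M))}$, and in the large-$x$ regime the denominator is dominated by $(fPx)^2$, so $\delta R^{\data}(x)\approx\frac{M(M-1)}{fx^2}$.

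Next I would insert this into the break-even condition. Anticipating a $\sqrt{T}$ scaling of the answer, $x\sim\sqrt{T}$ makes $\frac{x}{T}\sim T^{-1/2}$ negligible, so the factor $1-\frac{M+1}{2M}\frac{x}{T}\approx 1$ and the condition collapses to the algebraic equation $\frac{M(M-1)}{fx^2}=\frac{c}{T}$. Solving yields $x=\sqrt{M(M-1)T/(fc)}$; rewriting $c/M$ as $\frac{(M-1)\Delta R^{\ini}+(M+1)R^{\zf}}{2M}$ puts this in the first stated form, and the final approximation follows from $\frac{M-1}{M}\approx\frac{M+1}{M}\approx 1$ for large $M$.

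The main obstacle is that all three simplifications---$\log(1+u)\approx u$, $(M+fPx)(M+fP(x+M))\approx(fPx)^2$, and dropping the $\frac{x}{T}$ correction---are valid only in the self-consistent regime $M\ll fPx\ll fPT$. I would close the argument by bootstrapping: substituting the derived $x\sim\sqrt{T}$ back in confirms that $fPx\sim fP\sqrt{T}\gg M$ and $x/T\sim T^{-1/2}\to 0$ for large $T$, so each approximation is justified a posteriori and the discarded terms are genuinely higher order in $1/T$, consistent with the accent $\widetilde{\quad}$ flagging the result as an approximation.
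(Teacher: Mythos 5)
Your proposal is correct and follows essentially the same route as the paper's Appendix C: both substitute the open-loop $\mathcal{P}_{\mathrm{IBI}}$ from Proposition~\ref{pop:IBI} into the marginal-utility term to get $\delta R^{\data}(x)\approx\frac{M(M-1)}{fx^{2}}$ (the paper invokes a Maclaurin series where you carry out the finite difference explicitly), insert this into the break-even condition~\eqref{equ:optLenABfin}, neglect the $\frac{M+1}{2M}\frac{x}{T}$ correction for large $T$, and solve for $x$. Your closing self-consistency bootstrap ($x\sim\sqrt{T}$ justifying the dropped terms) is a small refinement the paper leaves implicit, but it does not change the argument.
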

 \begin{proof}  
See Appendix~\ref{App:OptLenABO}.
\end{proof}
 
In Theorem~\ref{thm:LenABO}, we observe that for large $T$, the optimal training duration scales as $ \sqrt{T} $. The optimal fraction of time resource devoted to training then decreases as $ \frac{1}{\sqrt{T}}$, which is slower than that of closed loop training systems.  The scaling rate of $ \sqrt{T} $ has been observed in various open loop training based systems. For example, similar scaling has been observed for both half-duplex MIMO broadcast channels with analog training\cite{kobayashi2011training} and point-to-point MIMO\cite{hassibi2003much}. Such scaling rate has also been observed in MIMO downlink with the full-duplex base station and full-duplex node~\cite{du2015SPAWC}. We find the same scaling law is shared by open loop training based systems. In Section~\ref{sec:optLenHD}, we conclude that the half-duplex counterparts also follow the respective scaling laws. Numerical results presented in Fig.~\ref{fig:optLenO} confirm our observation.

For sequential beamforming strategy with open loop training, as the number of transmitting antennas $M$ increases, the optimal training duration scales as $ \sqrt{M} $, unlike $ M^2 $ scaling in closed loop training based systems. The slower scaling rate in open loop training systems suggests a lower overhead cost in systems with a large number of users.  Further analytical results in Section~\ref{sec:HiSNRAna} confirm this observation.
 
Similar to closed loop systems, as inter-node interference increases, larger rate loss during training is expected in open loop training systems. Thus, one should use fewer symbols for training to account for this effect. Another interesting finding is that even when no inter-node interference exists, the optimal training duration is not $T$. The reason is that sequential beamforming strategy is only able to recover the training overhead partially.   
\subsection{Optimal Training Duration of Half-duplex Strategy} \label{sec:optLenHD}
In this subsection, we apply the marginal analysis method developed in Section~\ref{sec:optLenAB} to obtain approximations of optimal training duration of half-duplex systems. As a by-product of analysis in section \ref{sec:optLenAB}, we find the marginal utility, which stands for the gain in spectral efficiency of adding $M$ more training symbols, for half-duplex systems is
\begin{align}
MU=&\frac{T-\widetilde{T}^{\tr *}_{\hd}}{T}\left[R^{\data}\left( \widetilde{T}^{\tr *}_{\hd}+M\right)-R^{\data}\left( \widetilde{T}^{\tr *}_{\hd} \right)\right]\notag \\
=& \frac{T-\widetilde{T}^{\tr *}_{\hd}}{T} \delta R^{\data}\left(\widetilde{T}^{\tr *}_{\os}\right).  \label{equ:optHDMU} 
\end{align} 
 The marginal cost of half-duplex strategy is conveniently obtained by ignoring inter-beam interference after training as
\begin{equation}
MC=\frac{M}{T}R^{\zf}. \label{equ:optHDMC} 
\end{equation}
 
The approximation is obtained by letting marginal cost and utility be equal in half-duplex systems. We further proceed by applying the rate characterization provided in Proposition~\ref{pop:IBI}. The result regarding closed loop training based systems is first presented with open loop result follows.
\begin{theorem}\label{thm:LenHDC}
The approximation of optimal training duration $T_{\hdc}^{\tr *}$ of half-duplex strategy with closed loop training is
\begin{align*}
\widetilde{T}_{\osc}^{\tr *}=M\left(M-1\right)\frac{\log\left(\frac{TP}{c}\right)+\log\left(\left(1+fP\right)^{\frac{1}{M-1}}-1\right)}{\log\left(1+fP\right)},
\end{align*} 
where $c=MR^{\zf}$.
\end{theorem}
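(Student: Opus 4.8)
The plan is to mirror the marginal-analysis derivation already used for Theorem~\ref{thm:LenABC}, now feeding in the half-duplex marginal utility and marginal cost from~\eqref{equ:optHDMU} and~\eqref{equ:optHDMC}. The optimal duration is pinned down by the break-even condition $MU=MC$, which here reads
\begin{equation}
\frac{T-\widetilde{T}^{\tr *}_{\hd}}{T}\,\delta R^{\data}\left(\widetilde{T}^{\tr *}_{\hd}\right)=\frac{M}{T}R^{\zf}.\notag
\end{equation}
Multiplying through by $T$ and writing $c=MR^{\zf}$, this becomes $\left(T-\widetilde{T}^{\tr *}_{\hd}\right)\delta R^{\data}=c$. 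Because the closed-loop optimum will turn out to scale only logarithmically in $T$, the factor $T-\widetilde{T}^{\tr *}_{\hd}$ is well approximated by $T$ for large blocks, so the condition collapses to $\delta R^{\data}\left(\widetilde{T}^{\tr *}_{\hd}\right)\approx c/T$. This is the same single-variable equation solved in the sequential case; the only change is the value of the constant $c$, since the inter-node term $\Delta R^{\ini}$ drops out and the sequential-phase coefficients are replaced by the half-duplex ones.

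Next I would substitute the closed-loop inter-beam term from Proposition~\ref{pop:IBI}, namely $\mathcal{P}_{\mathrm{IBI}}(T^{\tr})=P(1+fP)^{-T^{\tr}/(M(M-1))}$, into expression~\eqref{equ:deltaRibi} for $\delta R^{\data}$. At the optimum the training is long enough that $\mathcal{P}_{\mathrm{IBI}}$ is small, so I would linearize $\log(1+x)\approx x$ and reduce $\delta R^{\data}$ to $\mathcal{P}_{\mathrm{IBI}}(\widetilde{T}^{\tr *}_{\hd})\left((1+fP)^{1/(M-1)}-1\right)$, up to the usual one-cycle shift in which the finite difference is evaluated. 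Setting this equal to $c/T$ yields
\begin{equation}
P(1+fP)^{-\widetilde{T}^{\tr *}_{\hd}/(M(M-1))}\left((1+fP)^{\frac{1}{M-1}}-1\right)=\frac{c}{T}.\notag
\end{equation}

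Finally I would take logarithms of both sides, isolate the exponent $\widetilde{T}^{\tr *}_{\hd}/(M(M-1))$, and divide through by $\log(1+fP)$, producing exactly
\begin{equation}
\widetilde{T}_{\hdc}^{\tr *}=M(M-1)\frac{\log\left(\frac{TP}{c}\right)+\log\left((1+fP)^{\frac{1}{M-1}}-1\right)}{\log(1+fP)},\notag
\end{equation}
with $c=MR^{\zf}$, which is the claimed expression (identical in form to Theorem~\ref{thm:LenABC} apart from the constant). The main obstacle is not the algebra but justifying the two approximations self-consistently: that $T-\widetilde{T}^{\tr *}_{\hd}\approx T$ and that $\mathcal{P}_{\mathrm{IBI}}$ is small enough to linearize the logarithm. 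Both rest on checking a posteriori that the resulting $\widetilde{T}^{\tr *}_{\hd}$ is $\Theta(\log T)$ — large enough to drive $\mathcal{P}_{\mathrm{IBI}}$ down, yet negligible against $T$ — which closes the loop on the approximation and confirms the logarithmic scaling noted after Theorem~\ref{thm:LenABC}.
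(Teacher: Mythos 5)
Your proposal follows essentially the same route as the paper's own proof: it reuses the marginal-analysis break-even condition $MU=MC$ with the half-duplex quantities from~\eqref{equ:optHDMU} and~\eqref{equ:optHDMC}, imports the Taylor/linearized approximation of $\delta R^{\data}$ already derived for Theorem~\ref{thm:LenABC}, drops the $T^{\tr}/T$ factor for large $T$, and solves the resulting equation by taking logarithms. Your algebra in fact lands on the stated formula correctly (the paper's appendix displays the same balance equation with the $\delta R^{\data}$ factor inadvertently written as a reciprocal, a typo your derivation implicitly fixes), and your a posteriori $\Theta(\log T)$ consistency check matches the paper's ``valid for large $T$'' justification.
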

 \begin{proof}  
See Appendix~\ref{App:OptLenHDC}.
\end{proof} 

We observe the optimal training duration $\widetilde{T}_{\osc}^{\tr *}$ and fraction $\frac{\widetilde{T}_{\osc}^{\tr *}}{T}$ of half-duplex counterpart to share the same scaling law as sequential beamforming strategy in Theorem~\ref{thm:LenABC}. It should also be noted that as the number of antenna $M$ increases, the optimal number of training symbols also increases quadratically. Comparing to Theorem~\ref{thm:LenABC}, the only difference lies in the $\log(c)$ term in the numerator, which can be viewed as the normalized marginal cost of the strategy. This finding also suggests that the optimal training duration difference between sequential beamforming strategy and half-duplex is a constant gap which is independent of the block length. Therefore, the difference in the fraction of training time decreases as $T$ increases. 
\begin{theorem} \label{thm:LenHDO}
The approximation of training duration $T_{\hdo}^{\tr *}$ that optimizes spectral efficiency of open loop training half-duplex systems is
 \begin{equation}
\widetilde{T}_{\hdo}^{\tr *}= \sqrt{\frac{\left(M-1\right)T}{f R^{\zf}}}.
\end{equation} 
\end{theorem}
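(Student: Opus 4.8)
The plan is to reuse the marginal-analysis break-even principle of Proposition~\ref{obs:MCMUAB}, exactly as in the proof of Theorem~\ref{thm:LenABO}, but now with the half-duplex marginal utility and cost already isolated in~\eqref{equ:optHDMU} and~\eqref{equ:optHDMC}. The starting relation is therefore the balance $MU=MC$, i.e.
\begin{equation}
\frac{T-\widetilde{T}_{\hdo}^{\tr *}}{T}\,\delta R^{\data}\!\left(\widetilde{T}_{\hdo}^{\tr *}\right)=\frac{M}{T}R^{\zf}. \notag
\end{equation}
Since the half-duplex mobiles do not receive during training, no inter-node interference term enters the cost, which is precisely why the right-hand side is simply $\frac{M}{T}R^{\zf}$ rather than the $\Delta R^{\ini}$-laden cost of the sequential scheme. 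The only strategy-specific quantity left to evaluate is the per-cycle inter-beam rate gain $\delta R^{\data}$, which I would read off from the open-loop interference power $\mathcal{P}_{\mathrm{IBI}}(T^{\tr})=\frac{P}{M}\frac{M-1}{1+\frac{T^{\tr}}{M}fP}$ of Proposition~\ref{pop:IBI} together with the finite-difference form~\eqref{equ:deltaRibi}.

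The core computation is the large-$T^{\tr}$ asymptotics of $\delta R^{\data}$. First I would approximate $1+\frac{T^{\tr}}{M}fP\approx\frac{T^{\tr}}{M}fP$ and note $\mathcal{P}_{\mathrm{IBI}}\to 0$, so that $1+\mathcal{P}_{\mathrm{IBI}}(T^{\tr}+M)\approx 1$ and $\log(1+x)\approx x$ both apply in~\eqref{equ:deltaRibi}. Evaluating the finite difference $\mathcal{P}_{\mathrm{IBI}}(T^{\tr})-\mathcal{P}_{\mathrm{IBI}}(T^{\tr}+M)$, which to leading order is $M$ times the derivative of $\mathcal{P}_{\mathrm{IBI}}$, collapses the expression to
\begin{equation}
\delta R^{\data}\!\left(T^{\tr}\right)\approx\frac{M(M-1)}{f\,(T^{\tr})^{2}}. \notag
\end{equation}
Next, because the anticipated solution obeys $\widetilde{T}_{\hdo}^{\tr *}=O(\sqrt{T})$, the prefactor $\frac{T-\widetilde{T}_{\hdo}^{\tr *}}{T}\to 1$ for large $T$, so the balance equation reduces to $\frac{M(M-1)}{f(\widetilde{T}_{\hdo}^{\tr *})^{2}}=\frac{M}{T}R^{\zf}$.

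Solving this last relation for the training length gives $(\widetilde{T}_{\hdo}^{\tr *})^{2}=\frac{(M-1)T}{fR^{\zf}}$, i.e. the claimed $\widetilde{T}_{\hdo}^{\tr *}=\sqrt{(M-1)T/(fR^{\zf})}$. I expect the main obstacle to be the asymptotic evaluation of $\delta R^{\data}$: one must justify that the finite difference of $\mathcal{P}_{\mathrm{IBI}}$ genuinely collapses to the clean $\frac{M(M-1)}{f(T^{\tr})^{2}}$ form, control the lower-order terms dropped in the $\log(1+x)\approx x$ step, and afterward carry out the a~posteriori consistency check that the resulting $\widetilde{T}_{\hdo}^{\tr *}$ indeed scales as $\sqrt{T}\ll T$, which is exactly what legitimizes replacing the utility prefactor by $1$.
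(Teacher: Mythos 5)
Your proposal is correct and follows essentially the same route as the paper: the paper's Appendix E likewise equates the half-duplex marginal utility~\eqref{equ:optHDMU} and cost~\eqref{equ:optHDMC}, invokes the same open-loop approximation $\delta R^{\data}(T^{\tr})\approx \frac{M(M-1)}{f\,(T^{\tr})^{2}}$ (obtained there via a Maclaurin expansion, equivalent to your finite-difference/derivative argument), and solves the resulting equation for large $T$. Your explicit justification that the prefactor $\frac{T-T^{\tr}}{T}\to 1$ because $T^{\tr}=O(\sqrt{T})$ is a slightly more careful statement of the approximation the paper makes implicitly when it reads the stated square-root formula off the balance equation.
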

\begin{proof}
See Appendix~\ref{App:OptLenHDO}.
\end{proof}

Similar to sequential beamforming system with open loop training, the optimal training duration scales with $T$ and $M$ at the rate of $\sqrt{T}$ and $\sqrt{M-1}$, respectively, as block length and antennas number grows. It should also be noted that by substituting the normalized marginal cost term $\frac{\Delta R^{\ini} +R^{\zf}}{2}$ as the half-duplex system's normalized marginal cost term $R^{\zf}$, we can also obtain Theorem~\ref{thm:LenHDO}. Instead of assuming each user has the same power constraint $P$ of the base station~\cite{kobayashi2011training}, our approximation results further consider the limitation of user power.  
The closed-form approximations are further applied in Section~\ref{sec:optSE} to characterize the spectral efficiency of sequential beamforming and half-duplex strategy with optimal training duration.
\section{Spectral Efficiency Evaluation}\label{sec:optSE}
In training based multiuser MIMO downlink systems, spectral efficiency is reduced due to imperfect CSI and training overhead resulting from its acquisition. To quantify the spectral efficiency loss of different systems, we compare the spectral efficiency of different systems with optimal training duration to a system where perfect CSI is available for the base station at no cost. It can be visualized as a genie provides perfect CSI to base station at the beginning of each block. Thus it serves as an upper bound for systems' performance with ZF; we label the perfect CSI system as \textsl{genie-aided system}. The spectral efficiency achieved is $\se^{\zf}=R^{\zf}$, which is presented in~\eqref{equ:R_zf}. The rate loss due to training overhead is then
\begin{equation}
\Delta \se_{s}= \se^{\zf}- \se_{s}, \quad s\in\{\osc,\oso,\hdc,\hdo\}.
\end{equation}
The spectral efficiency of the half-duplex counterparts are also analyzed for comparison.
\begin{theorem}\label{thm:SEABC}
The spectral efficiency loss of closed loop training based sequential beamforming system with respect to genie-aided system is upper-bounded as
\begin{align}
\Delta  \se_{\osc}&\left(T_{\osc}^{\tr *}\right) \notag \\
\leqslant &  \left(\frac{M-1}{2M}\Delta R^{\ini}+\frac{M+1}{2M} R^{\zf}\right)\frac{M\left(M-1\right)}{\log\left(1+fP\right)}\frac{\log T}{T}\notag\\&+o\left(\frac{\log T}{T}\right).  \label{equ:dSEABC}
\end{align} 
\end{theorem}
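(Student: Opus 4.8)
The plan is to bound the loss $\Delta\se_{\osc}=R^{\zf}-\se_{\osc}$ by combining the lower bound on $\se_{\os}$ in~\eqref{equ:SEosDeatiled} with the rate-gap characterizations of Lemma~\ref{lem:INI} and Proposition~\ref{pop:IBI}, and then evaluating the resulting expression at the approximate optimizer $\widetilde{T}_{\osc}^{\tr *}$ supplied by Theorem~\ref{thm:LenABC}. A useful preliminary observation is that the statement concerns the true optimum $T_{\osc}^{\tr *}$, so by definition $\se_{\osc}(T_{\osc}^{\tr *})\geqslant \se_{\osc}(\widetilde{T}_{\osc}^{\tr *})$; hence $\Delta\se_{\osc}(T_{\osc}^{\tr *})\leqslant \Delta\se_{\osc}(\widetilde{T}_{\osc}^{\tr *})$, and it suffices to upper-bound the loss evaluated at the tractable $\widetilde{T}_{\osc}^{\tr *}$.

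First I would substitute $R^{\tr}\geqslant R^{\zf}-\Delta R^{\tr}$ and $R^{\data}\geqslant R^{\zf}-\Delta R^{\data}$ into~\eqref{equ:SEosDeatiled} and subtract the result from $R^{\zf}$. After collecting the $R^{\zf}$ terms, the training-fraction coefficient becomes $1-\frac{M-1}{2M}=\frac{M+1}{2M}$, yielding the uniform bound
\begin{equation*}
\Delta\se_{\osc}\left(T^{\tr}\right)\leqslant \frac{M+1}{2M}\frac{T^{\tr}}{T}R^{\zf}+\frac{M-1}{2M}\frac{T^{\tr}}{T}\Delta R^{\tr}\left(T^{\tr}\right)+\left(1-\frac{T^{\tr}}{T}\right)\Delta R^{\data}\left(T^{\tr}\right),
\end{equation*}
valid for every $T^{\tr}$. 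This isolates the three sources of loss: the unrecovered fraction of the training time (weighted by $R^{\zf}$), the inter-node interference suffered while receiving during training (through $\Delta R^{\tr}$), and the residual inter-beam interference after training (through $\Delta R^{\data}$).

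Next I would evaluate this bound at $\widetilde{T}_{\osc}^{\tr *}$. The key quantitative step is to show that the closed-loop $\mathcal{P}_{\mathrm{IBI}}(\widetilde{T}_{\osc}^{\tr *})=P(1+fP)^{-\widetilde{T}_{\osc}^{\tr *}/(M(M-1))}$ collapses to $O(1/T)$: substituting the exponent from Theorem~\ref{thm:LenABC} cancels the $\log(1+fP)$ in the denominator and produces $\mathcal{P}_{\mathrm{IBI}}(\widetilde{T}_{\osc}^{\tr *})=\frac{c}{T\left((1+fP)^{1/(M-1)}-1\right)}$. Consequently $\Delta R^{\data}=\log\left(1+\mathcal{P}_{\mathrm{IBI}}\right)=O(1/T)$, so the last term is $o(\log T/T)$, and in the Lemma~\ref{lem:INI} expression $\Delta R^{\tr}\to\log\left(\frac{1+\alpha fP}{1+\alpha fP/(1+P/M)}\right)=\Delta R^{\ini}$ up to an $O(1/T)$ correction. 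Writing $\widetilde{T}_{\osc}^{\tr *}=\frac{M(M-1)}{\log(1+fP)}\log T+O(1)$ from Theorem~\ref{thm:LenABC} and collecting the leading $\frac{\log T}{T}$ contribution then gives exactly $\left(\frac{M-1}{2M}\Delta R^{\ini}+\frac{M+1}{2M}R^{\zf}\right)\frac{M(M-1)}{\log(1+fP)}\frac{\log T}{T}$.

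The main obstacle is bookkeeping the error terms so that everything outside the stated leading term is genuinely $o(\log T/T)$: the $O(1)$ additive constants inside $\widetilde{T}_{\osc}^{\tr *}$ contribute only $O(1/T)$ once divided by $T$; the $O(1/T)$ gap between $\Delta R^{\tr}$ and $\Delta R^{\ini}$ is multiplied by $\frac{T^{\tr}}{T}=O(\log T/T)$ and is therefore $O(\log T/T^2)$; and the $\Delta R^{\data}$ term is $O(1/T)$. All three are absorbed into $o(\log T/T)$. A secondary point worth stating explicitly is that the bound is evaluated at the approximate optimizer, which is legitimate precisely because of the optimality reduction noted at the outset, namely $\Delta\se_{\osc}(T_{\osc}^{\tr *})\leqslant \Delta\se_{\osc}(\widetilde{T}_{\osc}^{\tr *})$.
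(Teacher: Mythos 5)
Your proposal is correct and follows essentially the same route as the paper's own proof: bound the loss at the true optimum by evaluating at the approximate optimizer $\widetilde{T}_{\osc}^{\tr *}$ from Theorem~\ref{thm:LenABC}, decompose the loss into the unrecovered training fraction, the inter-node term, and the residual inter-beam term via Lemma~\ref{lem:INI} and Proposition~\ref{pop:IBI}, observe that $\mathcal{P}_{\mathrm{IBI}}(\widetilde{T}_{\osc}^{\tr *})=O(1/T)$ so the inter-beam contribution is $o(\log T/T)$, and collect the leading $\frac{\log T}{T}$ term. If anything, your write-up is more careful than the paper's (the explicit reduction $\Delta\se_{\osc}(T_{\osc}^{\tr *})\leqslant\Delta\se_{\osc}(\widetilde{T}_{\osc}^{\tr *})$ and the error bookkeeping are stated only implicitly there, and the paper's phrase about sorting terms with respect to $\frac{1}{\sqrt{T}}$ is evidently a typo for $\frac{\log T}{T}$, which your accounting gets right).
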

\begin{proof}
See Appendix~\ref{App:OptSEABC}.
\end{proof} 

Here $o\left(\frac{\log T}{T}\right)$ is a term that vanishes as $T$ increases, i.e., $\lim_{T \to \infty}\frac{o\left( \log T/T\right)}{ \log T/T}=0 $. By employing higher training power $f$, or by using longer block length $T$, the spectral efficiency overhead  decreases. However, in a more realistic scenario where user power and block length are inherently limited, the spectral efficiency loss cannot be neglected. Based on expression~\eqref{equ:dSEABC}, some observations are made for sequential beamforming strategy with closed loop training. i) The spectral efficiency loss scales quadratically as $M$ increases, which indicates sequential beamforming strategy with closed loop training is not a good choice for systems with a large number of antennas. ii) The spectral efficiency loss decreases rapidly as $ \frac{\log T}{T} $ as $T$ increases.  Fig.~\ref{fig:SE_T} presents the spectral efficiency policy for different strategy versus $T$. We observe that as $T$ grows, spectral efficiency loss drops rapidly for systems with closed loop training, which agrees with our analysis.  iii) As inter-node interference level decreases, smaller term $\Delta R^{\ini}$ suggests less spectral efficiency loss which is confirmed in Fig.~\ref{fig:SE_T}. 

\begin{figure*}[htbp]
  \centering
    \subfloat[Closed Loop Systems\label{fig:SE_T_C}]{%
	\includegraphics[scale=1]{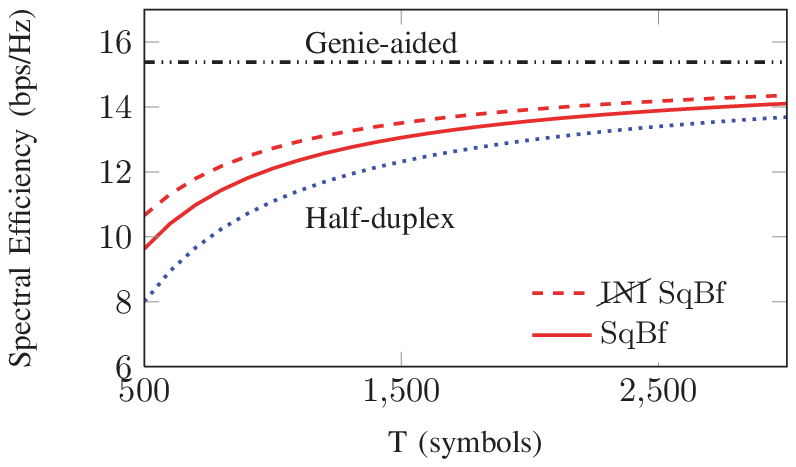} 
    }
    \hfill
    \subfloat[Open Loop Systems\label{fig:SE_T_O}]{%
	\includegraphics[scale=1]{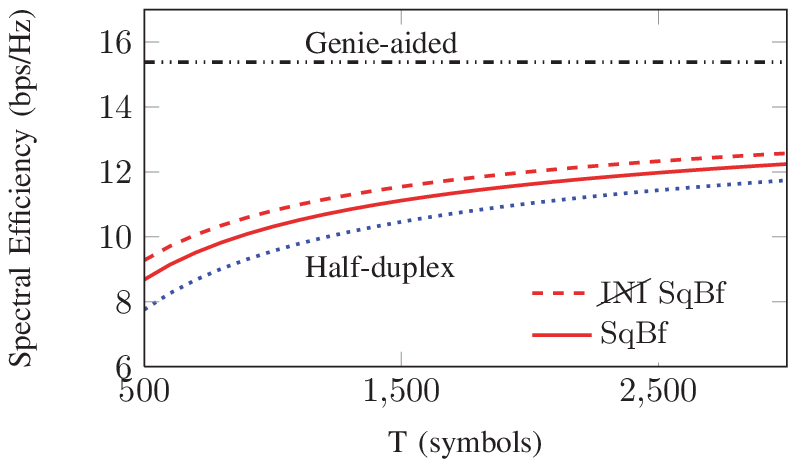} 
    }
 \caption{Spectral efficiency of an $8\times 8$ system with optimal training duration with and without inter-node interference and half-duplex counterparts with $P=15$ dB for $f=0.1$.}
    \label{fig:SE_T} 
  \end{figure*}    
\begin{theorem}\label{thm:SEABO}
The spectral efficiency loss of open loop training based sequential beamforming system with respect to genie-aided system is upper-bounded as
\begin{align} 
&\Delta  \se_{\oso}\left(T^{\tr *}_{\oso}\right) \notag \\
&\leqslant 2\sqrt{\frac{(M-1)\left[\frac{M-1}{2M}\Delta R^{\ini} +\frac{M+1}{2M}R^{\zf}\right]}{fT}}+o\left(\frac{1}{\sqrt{T}}\right). \label{thm:dSEABO}
\end{align}
\end{theorem}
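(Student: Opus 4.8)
The plan is to bound $\Delta\se_{\oso}$ directly from the lower bound on $\se_{\os}$ in~\eqref{equ:SEosDeatiled}, reusing the structure of the closed-loop argument in Appendix~\ref{App:OptSEABC} but with the open-loop inter-beam characterization. First I would write $\Delta\se_{\oso}=R^{\zf}-\se_{\oso}$ and substitute $R^{\tr}\geqslant R^{\zf}-\Delta R^{\tr}$ from Lemma~\ref{lem:INI} and $R^{\data}\geqslant R^{\zf}-\Delta R^{\data}$ from Proposition~\ref{pop:IBI} into~\eqref{equ:SEosDeatiled}. Collecting the $R^{\zf}$ contributions, the coefficient simplifies to $1-\frac{M+1}{2M}\frac{T^{\tr}}{T}$, so that
\[
\Delta\se_{\oso}\leqslant \frac{M+1}{2M}\frac{T^{\tr}}{T}R^{\zf}+\frac{M-1}{2M}\frac{T^{\tr}}{T}\Delta R^{\tr}+\left(1-\frac{T^{\tr}}{T}\right)\Delta R^{\data}.
\]

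Next I would specialize to open loop, where $\mathcal{P}_{\mathrm{IBI}}=\frac{P}{M}\frac{M-1}{1+\frac{T^{\tr}}{M}fP}$. Because the optimal training scales as $T^{\tr*}\sim\sqrt{T}$ by Theorem~\ref{thm:LenABO}, the inter-beam term vanishes at the relevant scale: $\mathcal{P}_{\mathrm{IBI}}=\frac{M-1}{fT^{\tr}}+O\!\left(\frac{1}{(T^{\tr})^{2}}\right)$, so $\Delta R^{\data}=\log(1+\mathcal{P}_{\mathrm{IBI}})=\frac{M-1}{fT^{\tr}}+o\!\left(\frac{1}{\sqrt{T}}\right)$, while $\Delta R^{\tr}\to\Delta R^{\ini}$. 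Feeding these in and absorbing the cross terms $\frac{T^{\tr}}{T}(\Delta R^{\tr}-\Delta R^{\ini})$ and $\frac{T^{\tr}}{T}\mathcal{P}_{\mathrm{IBI}}$ into the remainder, the bound reduces to two competing $\sqrt{T}$-scale contributions,
\[
\Delta\se_{\oso}\leqslant c\,\frac{T^{\tr}}{T}+\frac{M-1}{fT^{\tr}}+o\!\left(\frac{1}{\sqrt{T}}\right),\qquad c=\frac{M-1}{2M}\Delta R^{\ini}+\frac{M+1}{2M}R^{\zf}.
\]

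Finally I would substitute the optimal duration $T^{\tr*}_{\oso}=\sqrt{(M-1)T/(fc)}$ from Theorem~\ref{thm:LenABO}. This value is precisely the AM--GM minimizer of $c\,x/T+(M-1)/(fx)$, which is exactly why the marginal analysis selected it; the two terms therefore equalize, each becoming $\sqrt{(M-1)c/(fT)}$, and their sum is $2\sqrt{(M-1)c/(fT)}$. Reinserting the definition of $c$ gives the claimed $2\sqrt{(M-1)\left[\frac{M-1}{2M}\Delta R^{\ini}+\frac{M+1}{2M}R^{\zf}\right]/(fT)}$.

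The main obstacle is not any single inequality but the careful bookkeeping needed to certify the $o(1/\sqrt{T})$ remainder. The delicate points are that every correction must be shown to be $O(1/T)$: the quadratic residual in $\log(1+\mathcal{P}_{\mathrm{IBI}})=\mathcal{P}_{\mathrm{IBI}}+O(\mathcal{P}_{\mathrm{IBI}}^{2})$, the product $\frac{T^{\tr}}{T}(\Delta R^{\tr}-\Delta R^{\ini})$ of two $O(1/\sqrt{T})$ factors, and the term $\frac{T^{\tr}}{T}\mathcal{P}_{\mathrm{IBI}}=\frac{M-1}{fT}$. Since $\mathcal{P}_{\mathrm{IBI}}=O(1/\sqrt{T})$ and $T^{\tr}/T=O(1/\sqrt{T})$ at the optimum, each is $O(1/T)=o(1/\sqrt{T})$; I would also verify that the approximate optimizer $\widetilde{T}^{\tr*}_{\oso}$ differs from the true maximizer only at lower order, so that evaluating the upper bound at the approximate optimum preserves the leading constant.
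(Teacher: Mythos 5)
Your proposal is correct and takes essentially the same route as the paper's own proof: bound the loss at the true optimum by evaluating the lower bound~\eqref{equ:SEosDeatiled} (via Lemma~\ref{lem:INI} and Proposition~\ref{pop:IBI}) at the approximate optimizer $\widetilde{T}^{\tr *}_{\oso}$ of Theorem~\ref{thm:LenABO}, expand $\log\left(1+\mathcal{P}_{\mathrm{IBI}}\right)$ to its leading term $\frac{M-1}{fT^{\tr}}$, and let the two $1/\sqrt{T}$-scale terms equalize to give $2\sqrt{(M-1)c/(fT)}$. Your bookkeeping of the remainder terms (e.g., $\frac{T^{\tr}}{T}\left(\Delta R^{\tr}-\Delta R^{\ini}\right)=O(1/T)$ and $\frac{T^{\tr}}{T}\mathcal{P}_{\mathrm{IBI}}=O(1/T)$) is in fact slightly more explicit than the paper's, which simply drops the corresponding negative term and sorts the small terms.
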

\begin{proof}
See Appendix~\ref{App:OptSEABO}.
\end{proof} 

In~\eqref{thm:dSEABO}, the term $o(\frac{1}{\sqrt{T}})$ shows that the additional spectral efficiency loss term vanishes in systems with a large $T$. Interestingly, we observe a different scaling law with respect to both block length and antenna number. For sequential beamforming with open loop training, the spectral efficiency loss grows only at the rate of $ \sqrt{M-1} $, which is slower than $ M(M-1) $ in Theorem~\ref{thm:SEABC}. Thus, for systems with a large number of users, sequential beamforming with open loop training is advisable. 

On the other hand, the spectral efficiency loss decreases as $  \frac{1}{\sqrt{T}}$, which is confirmed from Fig.~\ref{fig:SE_T}. It should be further noted that the decreasing rate (w.r.t. $T$) is slower than that of systems with closed loop training ($\log T/T$). From Fig.~\ref{fig:SE_T}, an increase spectral efficiency is achieved by sequential beamforming strategy at low inter-node interference level. 

Fig.~\ref{fig:SE_T} plots closed and open loop training based systems with power controlled sequential beamforming, sequential beamforming, and half-duplex strategy. We observe a further spectral efficiency increase by allowing power adaptation during training. 
%
Having established performance bounds for the spectral efficiency loss of the sequential beamforming policy, we now investigate the performance of the half-duplex counterpart to compute the gains of the proposed sequential beamforming strategy.
\begin{theorem}\label{thm:SEHDC}
The spectral efficiency loss of closed loop training based half-duplex systems with respect to  genie-aided system is upper bounded as  
\begin{equation}
\Delta  \se_{\hdc}\left(T_{\hdc}^*\right)\leqslant R^{\zf} \frac{M\left(M-1\right)}{\log\left(1+fP\right)}\frac{\log T}{T}+o\left(\frac{\log T}{T}\right).
\end{equation} 
\end{theorem}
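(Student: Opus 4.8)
The plan is to mirror the proof of Theorem~\ref{thm:SEABC}, but using the half-duplex spectral efficiency expression~\eqref{equ:SEHD} and the half-duplex optimal training duration from Theorem~\ref{thm:LenHDC}. The half-duplex system carries no downlink data during the training phase, so there is no $R^{\tr}$ contribution and hence no surviving inter-node interference term in the leading coefficient; this actually makes the argument simpler than the sequential case.

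First I would start from the definition $\Delta\se_{\hdc}=R^{\zf}-\se_{\hdc}(T^{\tr *}_{\hdc})$, insert the half-duplex rate~\eqref{equ:SEHD}, and apply the inter-beam bound $R^{\data}\geqslant R^{\zf}-\Delta R^{\data}$ of Proposition~\ref{pop:IBI}. After rearranging $R^{\zf}-\frac{T-T^{\tr}}{T}(R^{\zf}-\Delta R^{\data})$, this immediately yields
\begin{equation*}
\Delta\se_{\hdc}\leqslant \frac{T^{\tr}}{T}R^{\zf}+\left(1-\frac{T^{\tr}}{T}\right)\Delta R^{\data},
\end{equation*}
so the loss splits cleanly into a training-overhead term proportional to $R^{\zf}$ and a residual inter-beam-interference term.

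Next I would substitute the approximate optimum $\widetilde{T}^{\tr *}_{\hdc}$ from Theorem~\ref{thm:LenHDC} (with $c=MR^{\zf}$) into the first term. Keeping only the dominant $\log T$ part of the numerator produces the leading behavior $\frac{M(M-1)}{\log(1+fP)}\frac{\log T}{T}R^{\zf}$, while the $P$- and $f$-dependent constants in the numerator collapse into the $o(\log T/T)$ remainder. For the second term, the key computation is to evaluate $\mathcal{P}_{\mathrm{IBI}}=P(1+fP)^{-\frac{T^{\tr}}{M(M-1)}}$ at $\widetilde{T}^{\tr *}_{\hdc}$: because the optimal exponent is chosen so that $(1+fP)^{-\widetilde{T}^{\tr *}_{\hdc}/[M(M-1)]}$ is proportional to $1/T$, the inter-beam power at the optimum scales as $\mathcal{P}_{\mathrm{IBI}}=O(1/T)$. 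Using $\log(1+x)\leqslant x$ then gives $\Delta R^{\data}=O(1/T)$, so the whole residual term is $O(1/T)=o(\log T/T)$.

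Finally I would collect the two pieces: the dominant term supplies the stated coefficient $R^{\zf}\frac{M(M-1)}{\log(1+fP)}\frac{\log T}{T}$ and everything else is absorbed into $o(\log T/T)$, completing the bound. I expect the only delicate step to be the asymptotic evaluation of $\mathcal{P}_{\mathrm{IBI}}$ at the optimum, i.e.\ verifying the $1/T$ cancellation between $\log(TP/c)$ in the exponent and the prefactor $P$; the rest is an algebraic rearrangement identical in spirit to the proof of Theorem~\ref{thm:SEABC}. As a consistency check, note that this half-duplex coefficient $R^{\zf}$ equals the sequential coefficient $\frac{M-1}{2M}\Delta R^{\ini}+\frac{M+1}{2M}R^{\zf}$ of Theorem~\ref{thm:SEABC} with its normalized marginal cost $c/M$ replaced by the half-duplex value $R^{\zf}$, exactly paralleling the relationship between Theorems~\ref{thm:LenABC} and~\ref{thm:LenHDC}.
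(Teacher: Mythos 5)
Your proposal is correct and follows essentially the same route as the paper's proof: evaluate $\Delta\se_{\hdc}$ at the approximate optimum $\widetilde{T}^{\tr *}_{\hdc}$ from Theorem~\ref{thm:LenHDC}, bound $R^{\data}$ via Proposition~\ref{pop:IBI}, and absorb both the constant part of the numerator and the residual inter-beam term $\log\bigl(1+\mathcal{P}_{\mathrm{IBI}}\bigr)=O(1/T)$ into $o(\log T/T)$. The only differences are cosmetic — you retain the factor $\bigl(1-\frac{T^{\tr}}{T}\bigr)$ on $\Delta R^{\data}$ where the paper simply drops the corresponding negative term, and you spell out the $P$-cancellation making $\mathcal{P}_{\mathrm{IBI}}=O(1/T)$, which the paper leaves implicit.
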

\begin{proof}
See Appendix~\ref{App:OptSEHDC}.
\end{proof}  

Here we observe the same scaling of spectral efficiency loss with respect to the number of antennas $M$ and block length $T$ as in sequential beamforming strategy with closed loop training. Actually, we can obtain Theorem~\ref{thm:SEHDC} by replacing the normalized marginal cost term $ \frac{M-1}{2M}\Delta R^{\ini}+\frac{M+1}{2M} R^{\zf} $ in Theorem~\ref{thm:SEABC} with $R^{\zf}$. The main reason is the similarity between the marginal utility term in sequential beamforming and half-duplex system.
\begin{theorem}\label{thm:SEHDO}
The spectral efficiency loss of open loop training based half-duplex systems with respect to  genie-aided system is upper bounded as  
\begin{equation}
\Delta  \se_{\hdo}\left(T_{\hdo}^*\right)\leqslant 2\sqrt{\frac{(M-1) R_{\zf}}{fT}}.
\end{equation} 
\end{theorem}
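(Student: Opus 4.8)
The plan is to bound the loss at the true optimum $T^{\tr *}_{\hdo}$ by its value at the closed-form approximate optimizer $\widetilde{T}^{\tr *}_{\hdo}$ supplied by Theorem~\ref{thm:LenHDO}. Because $T^{\tr *}_{\hdo}$ maximizes $\se_{\hdo}$, we have $\se_{\hdo}(T^{\tr *}_{\hdo}) \ge \se_{\hdo}(\widetilde{T}^{\tr *}_{\hdo})$, and therefore
\[
\Delta\se_{\hdo}\left(T^{\tr *}_{\hdo}\right) = R^{\zf} - \se_{\hdo}\left(T^{\tr *}_{\hdo}\right) \le R^{\zf} - \se_{\hdo}\left(\widetilde{T}^{\tr *}_{\hdo}\right),
\]
so it suffices to upper bound the loss at $\widetilde{T}^{\tr *}_{\hdo}$. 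Substituting the half-duplex spectral efficiency~\eqref{equ:SEHD} together with the rate bound $R^{\data} \ge R^{\zf} - \log(1+\mathcal{P}_{\mathrm{IBI}})$ of Proposition~\ref{pop:IBI}, I would split the loss into a training-overhead term and an inter-beam-interference term,
\[
\Delta\se_{\hdo}\left(\widetilde{T}^{\tr *}_{\hdo}\right) \le R^{\zf}\frac{\widetilde{T}^{\tr *}_{\hdo}}{T} + \frac{T-\widetilde{T}^{\tr *}_{\hdo}}{T}\log\left(1+\mathcal{P}_{\mathrm{IBI}}\right),
\]
the first reflecting that a fraction $\widetilde{T}^{\tr *}_{\hdo}/T$ of the block carries no data, the second the residual beam leakage suffered over the data phase.

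Next I would control the interference term with a chain of elementary bounds. Since $\frac{T-\widetilde{T}^{\tr *}_{\hdo}}{T}\le 1$ and $\log(1+x)\le x$, and the open-loop interference power satisfies $\mathcal{P}_{\mathrm{IBI}} = \frac{P}{M}\frac{M-1}{1+\frac{T^{\tr}}{M}fP} = \frac{P(M-1)}{M+T^{\tr}fP} \le \frac{M-1}{f\,T^{\tr}}$ once the positive constant $M$ is discarded from the denominator, the penalty is at most $(M-1)/(f\widetilde{T}^{\tr *}_{\hdo})$. This produces the two-term bound
\[
\Delta\se_{\hdo}\left(\widetilde{T}^{\tr *}_{\hdo}\right) \le R^{\zf}\frac{\widetilde{T}^{\tr *}_{\hdo}}{T} + \frac{M-1}{f\,\widetilde{T}^{\tr *}_{\hdo}}.
\]
Inserting the closed form $\widetilde{T}^{\tr *}_{\hdo}=\sqrt{(M-1)T/(fR^{\zf})}$ from Theorem~\ref{thm:LenHDO}, a direct calculation shows each of the two terms collapses to exactly $\sqrt{(M-1)R^{\zf}/(fT)}$, so their sum is $2\sqrt{(M-1)R^{\zf}/(fT)}$, the claimed bound. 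The balance between the two terms is no accident: $\widetilde{T}^{\tr *}_{\hdo}$ was defined by equating marginal utility~\eqref{equ:optHDMU} and marginal cost~\eqref{equ:optHDMC}, which is precisely the condition that minimizes the sum $R^{\zf}T^{\tr}/T + (M-1)/(fT^{\tr})$ and renders its two competing pieces equal.

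The algebra is routine; the two points deserving care are the logical direction of the optimality substitution and the monotone chain of inequalities. One must invoke that $T^{\tr *}_{\hdo}$ maximizes $\se_{\hdo}$, so that evaluating at the approximate optimizer \emph{over}-estimates rather than under-estimates the loss; getting this direction wrong would invalidate the bound. Because every subsequent step ($\frac{T-T^{\tr}}{T}\le 1$, $\log(1+x)\le x$, and dropping $M$ from the interference denominator) is a genuine upper bound carrying no truncated Taylor remainder, no $o(1/\sqrt{T})$ residual survives, which explains why the statement is a clean inequality rather than an asymptotic one like its sequential-beamforming analogue in Theorem~\ref{thm:SEABO}. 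The main conceptual obstacle is therefore recognizing that the closed-form approximate optimizer may be substituted for the true one while preserving the direction of the inequality; once that observation is in place, the remainder is elementary.
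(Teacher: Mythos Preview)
Your proof is correct and follows essentially the same route as the paper's: substitute the approximate optimizer $\widetilde{T}^{\tr *}_{\hdo}$ for the true one (using that the latter maximizes $\se_{\hdo}$), drop the negative cross term $-\frac{\widetilde{T}^{\tr *}_{\hdo}}{T}\log(1+\mathcal{P}_{\mathrm{IBI}})$, bound the logarithm, and observe that the two surviving terms are each $\sqrt{(M-1)R^{\zf}/(fT)}$. If anything, your version is slightly cleaner than the paper's, which invokes ``Maclaurin expansion'' where you explicitly use $\log(1+x)\le x$ together with $\mathcal{P}_{\mathrm{IBI}}\le (M-1)/(fT^{\tr})$, making it transparent why no $o(1/\sqrt{T})$ remainder appears.
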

\begin{proof}
See Appendix~\ref{App:OptSEHDO}.
\end{proof}  

The spectral efficiency loss scaling with both block length $T$ and the number of antennas $M$ are identical to that of sequential beamforming system with open loop training. Theorem~\ref{thm:SEHDO} can  be viewed as changing the normalized marginal cost of sequential beamforming strategy into its half-duplex counterpart.
 \begin{figure*}[htbp]
  \centering
    \subfloat[Closed Loop Systems]{%
	\includegraphics[scale=1]{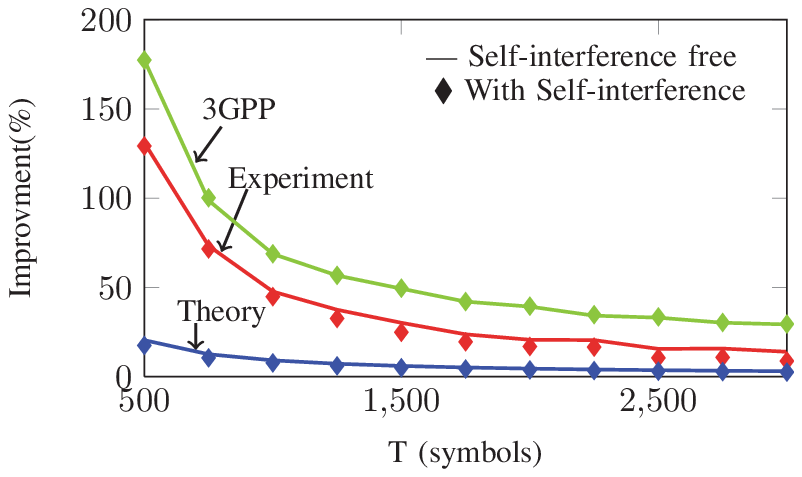} 
    }
    \hfill
    \subfloat[Open Loop Systems]{%
	\includegraphics[scale=1]{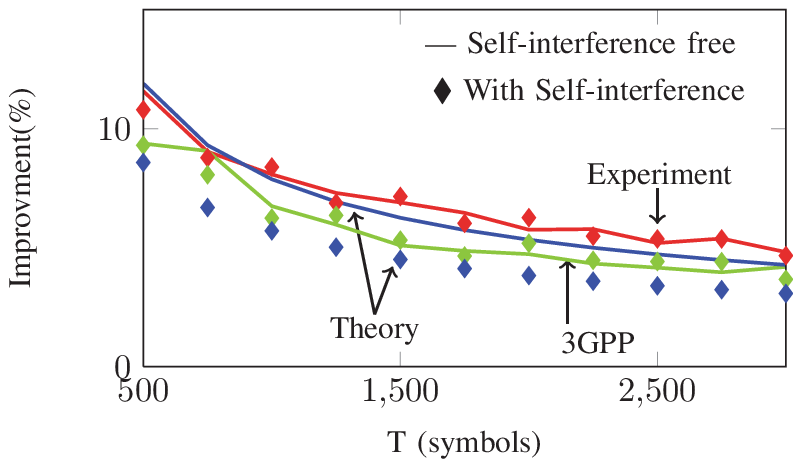} 
    }
 \caption{ {  Percent spectral efficiency improvement of an $8\times 8$ system with sequential beamforming strategy for systems with and without self-interference at $P=15$ dB with $f=0.1$ and $\alpha=0.3$. Here experiment and 3GPP refer to the simulation results obtained with experimental data~\cite{everett2015measurement} and 3GPP 3D channel model~\cite{3gpp.36.873}, respectively. Theory refers to results in Theorem 5-8.}}
    \label{fig:SE_Imp} 
  \end{figure*} 
Comparing Theorem~\ref{thm:SEHDC} and Theorem~\ref{thm:SEHDO} to their sequential beamforming counterparts, spectral efficiency loss is  substantially reduced by adopting sequential beamforming strategy. Sequential beamforming strategy improves spectral efficiency performance significantly. 

{   We now further validate the sequential beamforming strategy for an $8 \times 8$ system with experimental data from~\cite{everett2015measurement} and 3GPP 3D channel model~\cite{3gpp.36.873}. The number of base station antennas is chosen to be $8$, which is the maximum number currently supported by LTE. In the experiment, the authors of~\cite{everett2015measurement} measure the channel realization between an $8 \times 9$ two-dimensional antenna array and $12$ randomly located users.  The measurement is conducted in both indoor and outdoor environment. In 3GPP channel generation, the users are randomly uniformly placed in Urban Macrocell with equal probability to be indoor and outdoor. The simulations for systems containing self-interference is modeled as follows. The base station has a transmit power budget of $0$ dBm and a noise floor of $-90$ dB. Self-interference is managed through a combination of transmitter-receiver isolation of $40$ dB,  analog domain cancellation of $30$ dB, and digital domain cancellation of $30$ dB.}

{  The spectral efficiency of both sequential beamforming and half-duplex counterparts are evaluated through Monte Carlo simulations with $10,000$ iterations for both the proposed strategy and half-duplex counterpart. In each iteration, $8$ random users and $8$ antennas in a random horizontal antenna array are simulated.  For closed loop training, feedback bits are equally divided into real and imaginary parts with $3$ bits for integer part and rest for the fractional part.  In the simulation, we assume that the downlink transmit power in the training phase is adapted, which is not allowed in the previous theoretical analysis due to the analytical intractability in optimal power allocation when imperfect CSI exists. Thus, in cycle $i>1$, when Users $1,..,i-1$ receive data on downlink, each receiving user signal will be precoded with power constraint $\frac{P}{i-1}$. }

Fig.~\ref{fig:SE_Imp} confirms the spectral efficiency improvement achieved by sequential beamforming. The spectral efficiency improvement achieved in Fig.~\ref{fig:SE_T} is also shown for reference. Similar to results in Fig.~\ref{fig:SE_T}, sequential beamforming demonstrates a significant spectral efficiency improvement.

  For example, in a typical LTE system, there are around $500$ to $2100$ symbols in each slot depending on the available bandwidth ($1.4$ MHz to $5$ MHz). When the block length equals $500$ symbols, proposed sequential beamforming strategy attains an over 130\% and 12\% spectral improvement under the influence of inter-node interference for closed and open loop training systems, respectively. As $T$ grows, the performance of half-duplex counterparts grows. Thus the improvement in sequential efficiency decreases. From Fig.~\ref{fig:SE_T}, we conclude that a notable spectral efficiency improvement is still observed even for systems with long block length ($T=3000$). Lower inter-node level does show a better spectral efficiency improvement in Fig.~\ref{fig:SE_Imp}. 
\begin{remark}
For closed loop systems, sequential beamforming demonstrates a higher spectral efficiency compared to the results in Fig.~\ref{fig:SE_T_C}, where the base station serves each downlink users at a fixed power $P/M$ in the training phase.  This improvement suggests that proper power adaptation can increase the performance of sequential beamforming dramatically. On the other hand, we find power adaptation does not influence the spectral efficiency improvement of open loop system.
\end{remark}   

In this section, significant spectral efficiency improvement by adopting sequential beamforming is observed. 
{  In Section~\ref{sec:HiSNRAna}, we compare the spectral efficiency asymptotically where equal power allocation and ZF can achieve the full multiplexing gain.} 
\section{High $\snr$ Analysis}  \label{sec:HiSNRAna}
In Section~\ref{sec:optLen} and Section\ref{sec:optSE}, with optimized training duration, sequential beamforming strategy exhibits significant spectral efficiency improvement in the finite $\snr$ regime.  We continue our investigation of sequential beamforming strategy in the high $\snr$ regime where equal power allocation and ZF can achieve the full multiplexing gain.  Notation $\doteq$ is used to denote exponential equality, i.e.,
$$g\left(P\right) \doteq P^\zeta \Leftrightarrow  \mathrm{\lim_{P \to \infty}}\frac{\log g\left(P\right)}{\log P}=\zeta.$$ 
Since $fP\doteq P$, we now assume the power constraint for training is $P^\zeta$ to account for the limitation of training power.  We use a multiplexing gain metric $r$, which can be mathematically captured as
\begin{equation}
 \lim_{P \to \infty} {\frac{\se_{S}\left(\zeta,T^{\tr}\right)}{\log P}} \doteq r_{s} , \quad s\in\{\osc,\oso,\hdc, \hdo \}. 
\end{equation}
Our objective is to maximize the spectral efficiency asymptotically under certain training power constraint, which is mathematically captured as, for $s= \osc$, $\oso$, $\hdc$ and $\hdo $,
\begin{equation}
\max_{T^{\tr}} \quad r_{s}\left(\zeta,T^{\tr}\right).
\end{equation}

We first present the results regarding sequential beamforming system with closed loop training. The results for sequential beamforming strategy with open loop training then follows. In the asymptotic characterization of sequential beamforming strategy, for mathematical concision, we consider the fraction of full-duplex transmission term $\frac{M-1}{2M}$ in~\eqref{equ:SEosDeatiled} to be $\frac{1}{2}$. This approximation is valid for systems with large numbers of antennas.   

\subsection{Sequential Beamforming with Closed Loop Training}\label{sec:HighSNRC}
In this subsection, we consider the relationship between multiplexing gain $r$ and training power constraint $\zeta$. Similar to the approach in the finite $\snr$ regime; we first present a lemma capturing the influence of inter-beam and inter-node interference in the high $\snr$ regime, then the spectral efficiency is characterized. We define $\theta=\left(M\left(M-1\right)\right)/T$, which is useful in analysis.

\begin{Lemma}\label{lem:HiINI}
In closed loop systems, the downlink data transmission rate during training phase, under the influence of inter-beam and inter-node interference, is
\begin{equation}
\lim_{P \to \infty}\frac{ R^{\tr}\left(T^{\tr}\right)}{\log P}=\max\left(\min\left( \frac{\zeta}{\theta} \frac{T^{\tr}}{T},1-\zeta\right),0\right).
\end{equation} 
\end{Lemma}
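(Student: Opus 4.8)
The claim is an exponential (pre-log) characterization of $R^{\tr}$, so the natural strategy is to sandwich $\lim_{P\to\infty} R^{\tr}(T^{\tr})/\log P$ between a lower bound read off from Lemma~\ref{lem:INI} and a matching upper bound obtained directly from the $\sinr$ expression~\eqref{equ:sinr}. Every estimate is driven by the high-$\snr$ orders of the signal, inter-beam, and inter-node terms. Using the training-power convention $fP\doteq P^{\zeta}$ and the identity $M(M-1)=\theta T$, I would first record the three baseline scalings: $R^{\zf}\doteq\log P$ (each user has unit multiplexing gain under perfect ZF, so $R^{\zf}/\log P\to 1$), the inter-beam order $\mathcal{P}_{\mathrm{IBI}}=P(1+fP)^{-T^{\tr}/(M(M-1))}\doteq P^{\,1-\frac{\zeta}{\theta}\frac{T^{\tr}}{T}}$, and the inter-node order $\alpha fP\doteq P^{\zeta}$. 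Abbreviating $\beta:=1-\frac{\zeta}{\theta}\frac{T^{\tr}}{T}$, so that $1-\beta=\frac{\zeta}{\theta}\frac{T^{\tr}}{T}$, keeps the bookkeeping compact.

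\textbf{Lower bound.} I would divide the bound of Lemma~\ref{lem:INI} by $\log P$ and pass to the limit. The denominator inside the logarithm satisfies $1+\frac{\alpha fP}{1+P/M}\doteq 1+P^{\zeta-1}\doteq P^{0}$ since $\zeta\le 1$, so it contributes nothing to the pre-log; the numerator $1+\mathcal{P}_{\mathrm{IBI}}+\alpha fP\doteq P^{\max(0,\beta,\zeta)}$. Hence $\Delta R^{\tr}/\log P\to\max(0,\beta,\zeta)$, and I would obtain
\begin{equation*}
\liminf_{P\to\infty}\frac{R^{\tr}}{\log P}\ \ge\ 1-\max(0,\beta,\zeta)=\min\!\left(1,\tfrac{\zeta}{\theta}\tfrac{T^{\tr}}{T},1-\zeta\right).
\end{equation*}
Because $\frac{\zeta}{\theta}\frac{T^{\tr}}{T}\ge 0$ and $1-\zeta\le 1$, the leading $1$ is never active, so the right-hand side already equals the claimed value.

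\textbf{Upper bound.} Here I would work from~\eqref{equ:sinr} directly and produce two ceilings. Dropping the nonnegative inter-beam term from the denominator gives $\sinr_i\le \frac{P}{M}|\mathbf{h}_i^*\mathbf{v}_i|^2/(1+\alpha fP)$, a ratio of exponential order $P^{1-\zeta}$; integrability of $|\mathbf{h}_i^*\mathbf{v}_i|^2$ then yields $\limsup R^{\tr}/\log P\le 1-\zeta$. Dropping instead the inter-node term gives $\sinr_i\le \frac{P}{M}|\mathbf{h}_i^*\mathbf{v}_i|^2/(1+\mathrm{IBI}_i)$ with $\mathrm{IBI}_i=\sum_{j\ne i}\frac{P}{M}|\mathbf{h}_i^*\mathbf{v}_j|^2$; invoking the tight \emph{two-sided} limited-feedback characterization of the quantization error, $\mathrm{IBI}_i\doteq P^{\beta}$, forces $\sinr_i\doteq P^{1-\beta}$ and hence $\limsup R^{\tr}/\log P\le 1-\beta=\frac{\zeta}{\theta}\frac{T^{\tr}}{T}$. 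Taking the smaller of the two ceilings matches the lower bound.

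\textbf{Main obstacle and closing.} The delicate step is the second ceiling. Since $\log(1+\cdot)$ is convex and decreasing in $\mathrm{IBI}_i$, Jensen runs the wrong way, and an upper bound on $\mathbb{E}[\log(1+\sinr_i)]$ cannot come from $\mathbb{E}[\mathrm{IBI}_i]$ alone: I must rule out a heavy lower tail of $\mathrm{IBI}_i$ that would inflate the rate. This is exactly where concentration of the codebook quantization error around $2^{-B/(M-1)}=(1+fP)^{-T^{\tr}/(M(M-1))}$ enters; the atypically-small-interference events have probability decaying fast enough that, against the uniform $O(\log P)$ bound on the integrand, they leave the pre-log unchanged. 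Finally, because $R^{\tr}\ge 0$ deterministically, the limit is nonnegative, which is what the outer $\max(\cdot,0)$ records (redundant for $\zeta\in[0,1]$ but guarding the boundary cases); the sandwich then delivers the stated equality.
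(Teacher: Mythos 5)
Your proposal is correct, and it is strictly more complete than what the paper actually does. The paper's entire proof of Lemma~\ref{lem:HiINI} is the single sentence that one substitutes $fP$ with $P^{\zeta}$ in Appendix~\ref{App:INI}; that is, it takes the finite-$\snr$ lower bound of Lemma~\ref{lem:INI}, reads off the exponential orders $\mathcal{P}_{\mathrm{IBI}}\doteq P^{1-\frac{\zeta}{\theta}\frac{T^{\tr}}{T}}$ and $\alpha fP\doteq P^{\zeta}$, and declares the resulting pre-log to be the limit. Strictly speaking this establishes only your \textbf{lower bound} step (a $\liminf$), since Lemma~\ref{lem:INI} is a one-sided inequality; the matching converse is left implicit, presumably because the bounding steps in Appendix~\ref{App:INI} (Jensen plus dropping a positive interference term) are order-tight. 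Your proof supplies exactly that missing half: the first ceiling $1-\zeta$ from discarding inter-beam interference is airtight (Jensen runs the right way for the signal term), and you correctly isolate the genuinely delicate point in the second ceiling, namely that an upper bound on $\mathbb{E}[\log(1+\sinr_i)]$ cannot follow from $\mathbb{E}[\mathrm{IBI}_i]$ alone and requires a lower-tail (concentration) statement for the quantization error of order $(1+fP)^{-T^{\tr}/(M(M-1))}$. That concentration claim is invoked rather than proved, but it is a standard consequence of the random-vector-quantization analysis underlying~\cite{LimitedFb2006Jindal,caire2010multiuser}, and your argument that atypical events contribute only $o(\log P)$ against the uniform $O(\log P)$ integrand is the right way to close it. In short: the paper buys brevity by treating the Lemma~\ref{lem:INI} bound as pre-log exact without saying so; your sandwich makes the equality in the lemma statement an actual theorem, at the cost of one concentration lemma you should cite or prove explicitly. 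Your handling of the edge cases ($\zeta\leqslant 1$ making the denominator term $P^{0}$, the outer $\max(\cdot,0)$ guarding $\zeta>1$, and the leading $1$ in the $\min$ never binding) is also correct.
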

\begin{proof} 
The proof is obtained by substituting the training power in Appendix~\ref{App:INI} $fP$ with $P^{\zeta}$.
\end{proof}

Interestingly, we observe the impact of inter-node interference in the high $\snr$ regime to be divided into two scenarios. If only coarse CSI is available, the influence of inter-beam interference dominates the rate performance during training, i.e., there is no impact of inter-node interference on performance. Otherwise, the influence of inter-node interference dominates the rate performance during training. Now we present the maximal multiplexing gain as a function of training power constraint $\zeta$ for different closed loop training systems. 

Applying Lemma~\ref{lem:HiINI} to characterize~\eqref{equ:SEosDeatiled}, we have
\begin{align} 
\lim_{P \to \infty}\frac{\se_{\osc}}{\log P}  
=& \frac{1}{2}\frac{T^{\tr}}{T} \max\left( \min\left(\frac{ T^{\tr}}{T}\frac{\zeta}{\theta},1-\zeta\right),0\right) \notag \\ &+ \left(1-\frac{T^{\tr}}{T}\right) \min\left( \frac{ T^{\tr}}{T}\frac{\zeta}{\theta},1\right).  \label{equ:HiSEAB}
\end{align}

The results regarding sequential beamforming system without inter-node interference are first presented as an upper bound for the performance of proposed strategy. Then the results regarding the half-duplex systems are presented for comparison. Finally, the performance of sequential beamforming system with inter-node interference is presented.
 
\begin{theorem}[{Inter-node interference free sequential beamforming strategy}]\label{thm:HiABINIfree}
The maximal multiplexing gain of sequential beamforming strategy with closed loop training, without inter-node interference, under training power constraint $\zeta$ is
$$r_{\osc \cancel{\ini}}^{*}(\zeta)=\left\{\begin{matrix}
\frac{1}{2}\frac{\zeta}{\theta},\quad  \zeta <  \theta\\ 
1-\frac{1}{2}\frac{\theta}{\zeta},\quad \zeta  \geqslant  \theta
\end{matrix}\right. .$$
\end{theorem}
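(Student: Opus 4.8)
The plan is to begin from the asymptotic spectral efficiency already derived in~\eqref{equ:HiSEAB}, specialize it to the inter-node interference free setting, and then carry out a deterministic single-variable maximization over the training fraction. Removing inter-node interference corresponds to setting $\alpha=0$, so the training-phase rate is governed purely by inter-beam interference and therefore coincides with the post-training rate. Concretely, in Lemma~\ref{lem:HiINI} the saturation level $1-\zeta$ is produced solely by the inter-node term $\alpha f P$ inside the rate gap of Lemma~\ref{lem:INI} (it contributes exponent $\zeta$ to the numerator of the logarithm); once $\alpha=0$, that term vanishes and the training-phase multiplexing gain collapses to $\min\left(\frac{\zeta}{\theta}\frac{T^{\tr}}{T},1\right)$, identical to the half-duplex-phase term in~\eqref{equ:HiSEAB}.

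First I would introduce the normalized training fraction $x=\frac{T^{\tr}}{T}\in[0,1]$ and merge the two terms of~\eqref{equ:HiSEAB} into a single objective, using $\tfrac{1}{2}x+(1-x)=1-\tfrac{x}{2}$:
\[
r_{\osc\cancel{\ini}}(\zeta,x)=\left(1-\frac{x}{2}\right)\min\left(\frac{\zeta}{\theta}x,1\right).
\]
The maximization splits at the kink $\frac{\zeta}{\theta}x=1$, i.e.\ at $x=\frac{\theta}{\zeta}$. For $x\le\frac{\theta}{\zeta}$ the objective equals $\frac{\zeta}{\theta}\left(x-\frac{x^{2}}{2}\right)$, a downward parabola whose derivative $\frac{\zeta}{\theta}(1-x)$ is nonnegative on $[0,1]$, so it is increasing throughout; for $x\ge\frac{\theta}{\zeta}$ it equals $1-\frac{x}{2}$, which strictly decreases.

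Next I would compare the kink location with the feasible endpoint $x=1$. When $\zeta<\theta$ the kink $\frac{\theta}{\zeta}$ exceeds $1$, so the whole feasible interval lies in the increasing quadratic regime and the optimum is at $x=1$, giving $r^{*}=\frac{\zeta}{\theta}\left(1-\frac12\right)=\frac12\frac{\zeta}{\theta}$. When $\zeta\ge\theta$ the kink lies inside $[0,1]$; the objective increases up to $x=\frac{\theta}{\zeta}$ and decreases afterward, so the maximizer is the kink itself, and substituting $x=\frac{\theta}{\zeta}$ yields $r^{*}=1-\frac12\frac{\theta}{\zeta}$. These are exactly the two branches claimed in the theorem.

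The main obstacle is the first step rather than the optimization: one must justify carefully that discarding inter-node interference converts the training-phase gain from $\max\!\left(\min\left(\frac{\zeta}{\theta}x,1-\zeta\right),0\right)$ to $\min\left(\frac{\zeta}{\theta}x,1\right)$, i.e.\ that the only asymptotic role of $\alpha f P$ is the $1-\zeta$ saturation and that, once it is removed, the training and data phases share the same multiplexing gain $\min\left(\frac{\zeta}{\theta}x,1\right)$. The remaining maximization is elementary calculus; its only subtlety is verifying that in the regime $\zeta\ge\theta$ the maximizer is the non-smooth point $x=\frac{\theta}{\zeta}$ rather than an interior stationary point of either piece.
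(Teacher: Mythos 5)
Your proof is correct and follows essentially the same route as the paper: specialize~\eqref{equ:HiSEAB} to the interference-free case to obtain $r_{\osc \cancel{\ini}}(\zeta,T^{\tr})=\left(1-\frac{1}{2}\frac{T^{\tr}}{T}\right)\min\left(\frac{\zeta}{\theta}\frac{T^{\tr}}{T},1\right)$, then maximize over the training fraction by splitting at the kink $\frac{\zeta}{\theta}\frac{T^{\tr}}{T}=1$. Your write-up is simply more explicit than the paper's one-line optimization, both in justifying that $\alpha=0$ collapses the training-phase gain to $\min\left(\frac{\zeta}{\theta}\frac{T^{\tr}}{T},1\right)$ and in verifying that the maximizer is the endpoint $T^{\tr}=T$ when $\zeta<\theta$ and the kink $T^{\tr}/T=\theta/\zeta$ when $\zeta\geqslant\theta$.
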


\begin{proof}
The multiplexing gain of sequential beamforming without inter-node interference is
$$ r_{\osc \cancel{\ini}}(\zeta,T^{\tr})=\left(1-\frac{1}{2}\frac{T^{\tr}}{T}\right)\min( \frac{\zeta}{\theta} \frac{T^{\tr}}{T},1). $$
By maximizing the multiplexing gain in the cases of $\frac{\zeta}{\theta} \frac{T^{\tr}}{T} \geqslant 1$ and $\frac{\zeta}{\theta} \frac{T^{\tr}}{T} <1$ by choosing the optimal training duration, the theorem is directly obtained.
\end{proof}

The multiplexing gain is composed of two regimes. When $\zeta$ is small, spectral efficiency increases linearly as training power increases. In this regime, the growth of rate performance during and after training is the primary reason. As more training power is allowed, users send training symbols until no spectral efficiency loss is observed due to inter-beam interference. The spectral efficiency improvement attributes to using less time to send the same amount of training information. Thus, spectral efficiency performance increases less as training power grows. Now the asymptotic performance of half-duplex counterpart is presented for comparison.
\begin{theorem}[Half-duplex system]\label{thm:HiHDC}
The maximal multiplexing gain of closed loop training half-duplex system under training power constraint $\zeta$ is
$$r_{\hdc}^{*}(\zeta)=\left\{\begin{matrix}
\frac{1}{4}\frac{\zeta}{\theta},\quad  \zeta < 2\theta\\ 
1-\frac{\theta}{\zeta},\quad \zeta  \geqslant 2 \theta
\end{matrix}\right. .$$
\end{theorem}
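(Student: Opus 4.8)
The plan is to mirror the derivation behind Theorem~\ref{thm:HiABINIfree}, since the half-duplex system differs from the inter-node-interference-free sequential scheme only in the prefactor multiplying the per-user data rate: no downlink data is received during training, so the spectral efficiency is simply $\se_{\hd}=\frac{T-T^{\tr}}{T}R^{\data}$ from~\eqref{equ:SEHD}, with the $(1-\tfrac{1}{2}\tfrac{T^{\tr}}{T})$ factor of the sequential case replaced by $(1-\tfrac{T^{\tr}}{T})$.

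First I would compute the high-$\snr$ multiplexing gain of the post-training rate $R^{\data}$. Starting from Proposition~\ref{pop:IBI} and substituting the training power $fP$ by $P^{\zeta}$ as prescribed for the high-$\snr$ regime, the closed-loop inter-beam interference power becomes $\mathcal{P}_{\mathrm{IBI}}\doteq P^{\,1-\frac{\zeta}{\theta}\frac{T^{\tr}}{T}}$, using $\theta=M(M-1)/T$. Since $R^{\zf}/\log P\to 1$, dividing $R^{\data}\geqslant R^{\zf}-\log(1+\mathcal{P}_{\mathrm{IBI}})$ by $\log P$ and taking the limit gives $\lim_{P\to\infty}R^{\data}/\log P=\min\left(\frac{\zeta}{\theta}\frac{T^{\tr}}{T},\,1\right)$, where the $\min$ reflects whether the exponent of $\mathcal{P}_{\mathrm{IBI}}$ is positive (coarse CSI, interference-limited) or non-positive (interference negligible). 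Combining with the prefactor yields, with $x=T^{\tr}/T\in[0,1]$,
\begin{equation*}
r_{\hdc}(\zeta,T^{\tr})=(1-x)\min\left(\tfrac{\zeta}{\theta}x,\,1\right).
\end{equation*}

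Next I would maximize this over $x$ by splitting on whether the $\min$ is active. When $\frac{\zeta}{\theta}x<1$ (i.e. $x<\theta/\zeta$) the objective is the concave parabola $\frac{\zeta}{\theta}(x-x^2)$, whose unconstrained peak sits at $x=\tfrac12$; when $\frac{\zeta}{\theta}x\geqslant 1$ the objective is the decreasing line $1-x$, so the kink $x=\theta/\zeta$ is optimal on that branch. The two regimes of the theorem then emerge from comparing the interior peak $x=\tfrac12$ to the kink location $x=\theta/\zeta$: if $\zeta<2\theta$ the peak $x=\tfrac12$ lies inside the parabolic branch and delivers $r^{*}=\tfrac14\frac{\zeta}{\theta}$; if $\zeta\geqslant 2\theta$ the parabola is still increasing at the kink, so both branches attain their common value $1-\theta/\zeta$ there, giving $r^{*}=1-\frac{\theta}{\zeta}$.

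The calculus itself is routine; the only point requiring care is the bookkeeping of the piecewise objective, namely verifying that the two branches agree at the kink $x=\theta/\zeta$ (value $1-\theta/\zeta$) so that the constrained maximum is well defined, and checking the boundary case $\zeta=2\theta$, where the interior peak coincides with the kink. I expect this case analysis, rather than any analytic difficulty, to be the main thing to get right.
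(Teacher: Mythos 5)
Your proposal is correct and follows essentially the same route as the paper: the paper likewise reduces to $r_{\hdc}(\zeta,T^{\tr})=\left(1-\frac{T^{\tr}}{T}\right)\min\left(\frac{\zeta}{\theta}\frac{T^{\tr}}{T},1\right)$ (by dropping the full-duplex training term from~\eqref{equ:HiSEAB}, which itself comes from the same $fP\to P^{\zeta}$ substitution you perform) and then optimizes over $T^{\tr}$ by splitting on whether $\frac{\zeta}{\theta}\frac{T^{\tr}}{T}$ exceeds $1$. Your write-up simply fills in the case analysis (interior peak at $x=\tfrac12$ versus the kink at $x=\theta/\zeta$) that the paper's terse ``directly optimizing'' leaves implicit.
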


\begin{proof}
 Similar to inter-node interference free sequential beamforming strategy, omitting the extra spectral obtained during full-duplex training~\eqref{equ:HiSEAB}, we first express the multiplexing gain of half-duplex system as
$$ r_{\hdc}(\zeta,T^{\tr})=\left(1-\frac{T^{\tr}}{T}\right)\min( \frac{\zeta}{\theta} \frac{T^{\tr}}{T},1) .$$
Directly optimizing training duration in cases of $\frac{\zeta}{\theta} \frac{T^{\tr}}{T} \geqslant 1$ and $\frac{\zeta}{\theta} \frac{T^{\tr}}{T} <1$ leads to the proof.
\end{proof}

It should be noted that similar to sequential beamforming strategy with closed loop training, its half-duplex counterpart's spectral efficiency consists of two regimes. Compared to Theorem~\ref{thm:HiABINIfree}, a significant multiplexing gain improvement is observed. Thus, the proposed sequential beamforming strategy doubles the spectral efficiency of a unidirectional downlink communication asymptotically when $\zeta< \theta$. Finally, we look at the influence of inter-node interference on the asymptotic spectral efficiency of sequential beamforming strategy.

\begin{theorem}[Sequential beamforming strategy with inter-node interference]\label{thm:HiABC}
The maximal multiplexing gain of closed loop training sequential beamforming strategy under training power constraint $\zeta$ is
\begin{equation*}
r_{\osc}^{*}\left(\zeta\right)=\left\{\begin{matrix}
r^{*}_{\osc \cancel{\ini}}\left(\zeta\right), \zeta\leqslant \frac{3\theta}{2+3\theta}\\ 
r^{*}_{\osc \ini}\left(\zeta\right), \frac{3\theta}{2+3\theta}<\zeta<  \min\left(1, \max\left(\frac{3\theta}{2+3\theta},\frac{3\theta}{2-\theta}\right)\right)\\
r^{*}_{\hdc}\left(\zeta\right), \min\left(1,\max\left(\frac{3\theta}{2+3\theta},\frac{3\theta}{2-\theta}\right)\right)\leqslant \zeta 
\end{matrix}\right. ,  
\end{equation*}
where 
\begin{equation*}
r_{\osc \ini}^{*}\left(\zeta\right)=\left\{\begin{matrix}
\frac{\left(\left(2-\theta\right)\zeta+\theta\right)^2 }{16\zeta\theta},\quad \zeta<  \frac{3\theta}{2-\theta}\\
1-\frac{\theta}{2}-\frac{\theta}{2\zeta},\quad \frac{3\theta}{2-\theta}\leqslant \zeta 
\end{matrix}\right. .   
\end{equation*}
\end{theorem}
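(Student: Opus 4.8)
The plan is to reduce the statement to a one–dimensional maximization of the asymptotic multiplexing gain over the training fraction $x=T^{\tr}/T\in[0,1]$, and then to optimize the resulting continuous, piecewise–quadratic objective explicitly. Dividing~\eqref{equ:HiSEAB} by the convention $\tfrac{M-1}{2M}\approx\tfrac12$, the quantity to maximize is
\begin{equation*}
r(\zeta,x)=\tfrac{1}{2}x\,\max\!\Big(\min\big(\tfrac{\zeta}{\theta}x,\,1-\zeta\big),0\Big)+(1-x)\min\!\Big(\tfrac{\zeta}{\theta}x,\,1\Big).
\end{equation*}
Before optimizing I would record two free bounds that motivate the three branches of the claim: dropping the (nonnegative) full-duplex term gives $r(\zeta,x)\ge(1-x)\min(\tfrac{\zeta}{\theta}x,1)$, whose maximum is exactly $r^{*}_{\hdc}$ of Theorem~\ref{thm:HiHDC}; replacing the training cap $1-\zeta$ by $1$ gives $r(\zeta,x)\le(1-\tfrac{x}{2})\min(\tfrac{\zeta}{\theta}x,1)$, whose maximum is $r^{*}_{\osc\cancel{\ini}}$ of Theorem~\ref{thm:HiABINIfree}. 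Thus $r^{*}_{\hdc}(\zeta)\le r^{*}_{\osc}(\zeta)\le r^{*}_{\osc\cancel{\ini}}(\zeta)$ automatically, and the theorem amounts to identifying where each bound is attained.

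The second step is a regime analysis driven by the two saturation points. Since the training rate saturates at $1-\zeta$ (Lemma~\ref{lem:HiINI}) while the data rate saturates at $1$, the caps switch on at $x_{a}=\theta(1-\zeta)/\zeta$ and $x_{b}=\theta/\zeta$, with $x_{a}<x_{b}$ for all $\zeta\in(0,1)$. On $[0,x_{a}]$ neither cap binds and $r=\tfrac{\zeta}{\theta}x-\tfrac{\zeta}{2\theta}x^{2}$ is increasing; on $[x_{b},1]$ both bind and $r=1-\tfrac{1+\zeta}{2}x$ is decreasing; on the middle interval $r$ is the concave parabola $\big(\tfrac{1-\zeta}{2}+\tfrac{\zeta}{\theta}\big)x-\tfrac{\zeta}{\theta}x^{2}$ with vertex $x^{\star}=\tfrac12+\tfrac{\theta(1-\zeta)}{4\zeta}$. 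Because $r$ rises, is concave, then falls, its global maximizer is $x^{\star}$ if $x^{\star}\in[x_{a},x_{b}]$ and the nearer junction otherwise. Solving $x^{\star}\ge x_{a}$ yields the threshold $\zeta=\tfrac{3\theta}{2+3\theta}$, and $x^{\star}\le x_{b}$ yields $\zeta=\tfrac{3\theta}{2-\theta}$; these are exactly the crossover points in the statement, and the outer $\min\!\big(1,\max(\cdot,\cdot)\big)$ simply clips the larger threshold $\tfrac{3\theta}{2-\theta}$ to the admissible range $\zeta\le1$.

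The third step evaluates $r$ at the maximizer in each case. Substituting the vertex into the middle parabola gives $a^{2}/(4b)=\big((2-\theta)\zeta+\theta\big)^{2}/(16\zeta\theta)$, the first branch of $r^{*}_{\osc\ini}$; evaluating at the right junction $x_{b}$ gives $1-\tfrac{\theta}{2}-\tfrac{\theta}{2\zeta}$, its second branch. For small $\zeta$ the vertex lies left of $x_{a}$, the optimizer sits in the un-capped region where $r$ coincides with the inter-node-interference-free objective, and the problem collapses onto Theorem~\ref{thm:HiABINIfree}, i.e.\ $r^{*}_{\osc\cancel{\ini}}$; for $\zeta$ toward $1$ the cap $1-\zeta\to0$ annihilates the full-duplex term, the objective collapses onto the half-duplex objective, and the optimum meets the lower bound $r^{*}_{\hdc}$. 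Assembling the three cases reproduces the piecewise formula.

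I expect the main obstacle to be the boundary bookkeeping rather than any single computation: one must verify that the nested $\max(\min(\cdot),0)$ is resolved by the correct branch on each of the three intervals, that the two crossover thresholds are ordered consistently (which is what forces the $\max$ inside the $\min$), and—most delicately—that the extreme-$\zeta$ branches genuinely attain the benchmarks $r^{*}_{\osc\cancel{\ini}}$ and $r^{*}_{\hdc}$, which should be argued through the sandwich of the first paragraph together with the vertex-location thresholds, noting that the two bounds coincide with $r^{*}_{\osc}$ in the limits $\zeta\to0^{+}$ and $\zeta\to1^{-}$ and agree to leading order in the small parameter $\theta=M(M-1)/T$ elsewhere.
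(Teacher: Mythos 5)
Your reduction to a one--dimensional maximization over $x=T^{\tr}/T$, the three regions separated by $x_a=\theta(1-\zeta)/\zeta$ and $x_b=\theta/\zeta$, and the vertex/junction computations (vertex value $\frac{\left((2-\theta)\zeta+\theta\right)^2}{16\zeta\theta}$, junction value $1-\frac{\theta}{2}-\frac{\theta}{2\zeta}$, thresholds $\frac{3\theta}{2+3\theta}$ and $\frac{3\theta}{2-\theta}$) are exactly the paper's route: Appendix~\ref{App:HiABC} is the same three-regime derivative analysis, only more terse, and your middle-branch derivation fills it in correctly.

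However, there is a genuine gap in your final assembly, located precisely at the step you yourself flag as ``most delicate,'' and it cannot be closed as stated. When the vertex lies left of $x_a$, the maximizer is the junction $x_a$ (region I is increasing, region II is then decreasing), so $r^{*}_{\osc}(\zeta)=r(\zeta,x_a)=(1-\zeta)\bigl(1-\tfrac{\theta(1-\zeta)}{2\zeta}\bigr)$. This equals $r^{*}_{\osc\cancel{\ini}}(\zeta)$ of Theorem~\ref{thm:HiABINIfree} only when the INI-free maximizer $\min(1,x_b)$ is itself $\leqslant x_a$, i.e.\ only for $\zeta\leqslant\frac{\theta}{1+\theta}$, which is strictly smaller than the claimed threshold $\frac{3\theta}{2+3\theta}$. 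On the nonempty interval $\frac{\theta}{1+\theta}<\zeta<\frac{3\theta}{2+3\theta}$ the first branch fails: for $\theta=0.1$, $\zeta=0.12$ one gets $r^{*}_{\osc}\approx 0.557$ at $x_a$, while $r^{*}_{\osc\cancel{\ini}}=1-\frac{\theta}{2\zeta}\approx 0.583$. Symmetrically, your claim that for $\zeta$ toward $1$ ``the optimum meets the lower bound $r^{*}_{\hdc}$'' holds exactly only at $\zeta\geqslant 1$, where the cap $1-\zeta$ truly annihilates the full-duplex term; for $\theta<1/2$ and $\frac{3\theta}{2-\theta}\leqslant\zeta<1$ the maximum sits at $x_b$ with value $1-\frac{\theta}{2}-\frac{\theta}{2\zeta}$, which strictly exceeds $r^{*}_{\hdc}(\zeta)$ for every $\zeta<1$ (at $\theta=0.1$, $\zeta=0.5$: $0.85$ versus $0.8$). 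Note that this junction value is exactly the second branch of $r^{*}_{\osc\ini}$, which the theorem's outer piecewise formula never invokes --- a symptom of the same bookkeeping slip. So the sandwich $r^{*}_{\hdc}\leqslant r^{*}_{\osc}\leqslant r^{*}_{\osc\cancel{\ini}}$ you set up yields the inequalities but cannot yield the claimed equalities, because on those intervals they are false; they hold only to leading order in $\theta$, as your closing sentence suspects. To be fair, the paper's own proof (``carefully evaluating the derivative \ldots leads to the theorem'') glosses over exactly the same point, so your derivation is faithful to it; but as a proof of the stated exact branch identities, it, like the paper's, is incomplete.
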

\begin{proof}
Detailed proof can be found in Appendix~\ref{App:HiABC}.
\end{proof} 

The influence of inter-node interference on the spectral efficiency, interestingly, can be divided into three regimes. For systems targeting small multiplexing gain, only small amount of training power is needed. In this regime, inter-beam interference dominates the downlink performance during full-duplex training and no inter-node interference penalty is observed. However, if the higher multiplexing gain is targeted, the inter-node interference dominates the downlink performance during full-duplex training. In this case, inter-node interference will reduce the potential benefit obtained from sequential beamforming strategy. Finally, if a relatively high training power ($\zeta>1$) is used to achieve high spectral efficiency, the high inter-node interference level leads to no benefit from the downlink transmission during the training phase. This observation is confirmed by simulations shown in Fig.~\ref{fig:rVSzeta}.

\begin{remark}
As the number of antennas increases (with respect to block length), $\theta$ increases. Interestingly, we observe that higher $\theta$ actually increases the regime where sequential beamforming strategy does not suffer from inter-node interference. As $\theta \to \infty$, sequential beamforming strategy suffers no inter-node interference as long as training power constraint is smaller than 1.  
\end{remark}

\subsection{Sequential Beamforming with Open Loop Training} \label{sec:HighSNRO}
Following the same approach in Section~\ref{sec:HighSNRC}, we now investigate the spectral efficiency for different open loop training based systems. The influence of inter-beam interference in high $\snr$ regime is first characterized, which is followed by the multiplexing gain analysis.

\begin{Lemma}\label{lemma:IBI_HiSNRO}
For open loop training based systems, the rate performance under inter-beam interference after training is
\begin{align*}
\lim_{P \to \infty}\frac{ R^{\data}}{\log P}&=\lim_{P \to \infty}\frac{R^{\zf}-\log\left[1+\frac{P}{M}\frac{M-1}{1+\frac{T^{\tr}}{M}  P^{\zeta}}\right]}{\log P}\\&=\max\left(1-\zeta,0\right).
\end{align*}
\end{Lemma}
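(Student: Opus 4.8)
The plan is to prove the displayed chain one equality at a time: the first equality reduces the intractable exact after-training rate $R^{\data}$ to the explicit expression furnished by Proposition~\ref{pop:IBI}, and the second evaluates that explicit expression's normalized limit using the exponential-equality calculus $\doteq$ introduced earlier in this section. Throughout I would work under the high-$\snr$ substitution $fP\mapsto P^{\zeta}$, which turns the open-loop inter-beam interference power into $\mathcal{P}_{\mathrm{IBI}}=\frac{P}{M}\,\frac{M-1}{1+\frac{T^{\tr}}{M}P^{\zeta}}$, exactly the term appearing inside the logarithm in the statement; so the entire lemma is really a statement about the growth rate of this interference-limited rate.

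For the first equality I would invoke Proposition~\ref{pop:IBI} with $\alpha=0$ (no inter-node interference remains after training), which gives $R^{\data}\ge R^{\zf}-\log(1+\mathcal{P}_{\mathrm{IBI}})$; after the substitution this lower bound is \emph{identically} the bracketed expression on the right of the statement. To upgrade this inequality to an equality of the two normalized limits, I would sandwich $R^{\data}$ from above by the interference-free rate $R^{\zf}$ and argue, following the tight rate-gap characterization imported from~\cite{caire2010multiuser}, that the additive gap between $R^{\data}$ and its Proposition~\ref{pop:IBI} bound stays $O(1)$ as $P\to\infty$. Dividing by $\log P$ then annihilates that $O(1)$ gap, so that $\lim_{P\to\infty}R^{\data}/\log P$ and $\lim_{P\to\infty}(R^{\zf}-\log(1+\mathcal{P}_{\mathrm{IBI}}))/\log P$ coincide, which is the first equality.

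For the second equality I would compute the exponent of the bracketed expression by splitting on $\zeta$ relative to $1$. When $\zeta<1$ the denominator obeys $1+\frac{T^{\tr}}{M}P^{\zeta}\doteq P^{\zeta}$, so that $\mathcal{P}_{\mathrm{IBI}}\doteq P^{1-\zeta}$ and the governing interference exponent is $1-\zeta$; when $\zeta\ge 1$ the denominator grows at least as fast as the numerator, $\mathcal{P}_{\mathrm{IBI}}$ stays bounded, and the interference exponent drops to $0$. Collecting the two regimes and reading off this governing exponent delivers the asserted value $\max(1-\zeta,0)$. I expect the genuine work to be twofold: first, establishing that the Proposition~\ref{pop:IBI} lower bound is exponentially tight, which is precisely what licenses the \emph{first} equality rather than only the inequality $\ge$; and second, handling the knife-edge $\zeta=1$, where the constant $1$ and the term $\frac{T^{\tr}}{M}P^{\zeta}$ in the denominator are of the same order and the two regimes have to be matched. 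The tightness of the imported rate-gap bound is the main obstacle, since everything in the exponent computation then follows from routine application of the $\doteq$ calculus.
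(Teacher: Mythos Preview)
Your overall route is exactly the paper's one-line argument: substitute $fP\mapsto P^{\zeta}$ in Proposition~\ref{pop:IBI} and read off the high-$\snr$ exponent. The paper does not justify why the Proposition~\ref{pop:IBI} inequality becomes an equality after dividing by $\log P$; your sandwich (upper bound by $R^{\zf}$, $O(1)$ slack from the rate-gap analysis of~\cite{caire2010multiuser}) is a reasonable way to fill that in.

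There is, however, a slip in your second-equality computation. You correctly obtain $\mathcal{P}_{\mathrm{IBI}}\doteq P^{\max(1-\zeta,0)}$ and hence $\log(1+\mathcal{P}_{\mathrm{IBI}})/\log P\to\max(1-\zeta,0)$. But the middle expression is $\bigl(R^{\zf}-\log(1+\mathcal{P}_{\mathrm{IBI}})\bigr)/\log P$, and since $R^{\zf}/\log P\to 1$, its limit is $1-\max(1-\zeta,0)=\min(\zeta,1)$. You have reported the exponent of the rate \emph{gap} $\Delta R^{\data}$ where the exponent of the rate $R^{\data}$ itself is required. The printed right-hand side in the lemma appears to carry the same slip: the very next result, Theorem~\ref{thm:IBI_HiSNRO}, asserts $r^{*}_{\hdo}=\zeta$, which follows from $R^{\data}/\log P\to\min(\zeta,1)$ together with $T^{\tr}/T\to 0$, and is inconsistent with $\max(1-\zeta,0)$ except at $\zeta=1/2$. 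So your mechanism is the paper's mechanism, but the final identification must subtract the interference exponent from~$1$.
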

\begin{proof}
Substituting the training power to $P^{\zeta}$ in Proposition~\ref{pop:IBI} leads to the theorem. 
\end{proof} 

The rate performance achieved after training, surprisingly, is only decided by the training power constraint $\zeta$. More training symbols do not help to reduce inter-beam interference after training. Thus, the optimal training duration goes to zero in high $\snr$ regime. Therefore, the maximal multiplexing gain performance is obtained as follows.

\begin{theorem}\label{thm:IBI_HiSNRO}
For open loop training based systems, the multiplexing gain of both sequential beamforming and half-duplex strategy is only decided by training power constraint as
\begin{equation}
r_{\oso}^{*}=r_{\hdo}^{*}=\zeta.
\end{equation}
\end{theorem}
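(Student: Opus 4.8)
The plan is to handle both the half-duplex and the sequential-beamforming systems in parallel, by normalizing their spectral-efficiency expressions by $\log P$ and feeding in the post-training rate characterization from Lemma~\ref{lemma:IBI_HiSNRO}. Reading the loss displayed there as $\Delta R^{\data}/\log P\to\max(1-\zeta,0)$ (equivalently $\mathcal{P}_{\mathrm{IBI}}\doteq P^{1-\zeta}$) and using that $R^{\zf}$ has unit pre-log, the post-training per-user rate attains multiplexing gain $\lim_{P\to\infty}R^{\data}/\log P=1-\max(1-\zeta,0)=\min(\zeta,1)=\zeta$ for $\zeta\in[0,1]$, crucially \emph{independent} of $T^{\tr}$. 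The decisive structural fact I would then exploit is that, since more pilots do not improve this gain, each extra training symbol is pure overhead, so the optimizer should drive the training fraction $\tau:=T^{\tr}/T$ to $0$; the task is to make this precise for each strategy and to show both optimized gains meet at $\zeta$.

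First I would dispose of the half-duplex system. From \eqref{equ:SEHD}, $\se_{\hdo}=(1-\tau)R^{\data}$, so $r_{\hdo}(\zeta,\tau)=(1-\tau)\zeta$, which decreases strictly in $\tau$ on $(0,1)$. Its supremum is approached as $\tau\downarrow 0$, and because any fixed positive $T^{\tr}$ keeps the open-loop estimate informative (the estimation quality $\tfrac{T^{\tr}}{M}P^{\zeta}\to\infty$), that limit is admissible; hence $r_{\hdo}^{*}=\zeta$.

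For the sequential-beamforming system I would establish matching bounds. Using \eqref{equ:SEosDeatiled} with the stated simplification $\tfrac{M-1}{2M}\approx\tfrac12$ gives $\se_{\oso}\geqslant\tfrac12\,\tau\,R^{\tr}(T^{\tr})+(1-\tau)R^{\data}(T^{\tr})$. Achievability is immediate: dropping the nonnegative first term yields $\se_{\oso}\geqslant\se_{\hdo}$ for every $\tau$, so $r_{\oso}^{*}\geqslant r_{\hdo}^{*}=\zeta$. For the converse I would bound the during-training contribution. Setting $g_{\tr}(\zeta):=\lim_{P\to\infty}R^{\tr}/\log P$, the inter-beam interference alone caps the SINR at exponent $\zeta$ (signal $\doteq P$ against inter-beam interference $\doteq P^{1-\zeta}$), so $g_{\tr}(\zeta)\leqslant\zeta$; the inter-node term $\doteq P^{\zeta}$ only lowers the exponent further. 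Therefore $r_{\oso}(\zeta,\tau)=\tfrac12\,\tau\,g_{\tr}(\zeta)+(1-\tau)\zeta\leqslant\zeta\!\left(1-\tfrac{\tau}{2}\right)\leqslant\zeta$, giving $r_{\oso}^{*}\leqslant\zeta$. Combining the two bounds yields $r_{\oso}^{*}=\zeta=r_{\hdo}^{*}$.

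The hard part will be the converse for sequential beamforming: I must rule out the during-training throughput lifting the multiplexing gain above $\zeta$. This rests on the exponential estimate $g_{\tr}(\zeta)\leqslant\zeta$, which I would obtain by transplanting the inter-beam calculation behind Lemma~\ref{lem:HiINI} and Lemma~\ref{lemma:IBI_HiSNRO} to the training phase (substituting $fP\mapsto P^{\zeta}$) and checking that the added inter-node interference can only reduce the SINR exponent. Once $g_{\tr}(\zeta)\leqslant\zeta$ holds, the coefficient $\tfrac12 g_{\tr}(\zeta)-\zeta$ multiplying $\tau$ is negative, which simultaneously delivers the converse and confirms that the optimal training duration vanishes; the residual bookkeeping---the $\tfrac{M-1}{2M}\approx\tfrac12$ approximation and the cap of the gain at $\zeta=1$---is routine.
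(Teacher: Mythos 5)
Your proposal is correct and follows essentially the same route the paper itself takes: there Theorem~\ref{thm:IBI_HiSNRO} is presented as an immediate consequence of Lemma~\ref{lemma:IBI_HiSNRO} (whose final display should indeed be read as the rate \emph{loss} having pre-log $\max\left(1-\zeta,0\right)$, i.e., $R^{\data}$ having pre-log $\min\left(\zeta,1\right)$ independent of $T^{\tr}$, exactly as you interpreted it), so that training becomes pure overhead and the optimal training fraction vanishes, giving pre-log $\zeta$ for both strategies. Your only addition is the explicit converse for sequential beamforming via $g_{\tr}\left(\zeta\right)=\min\left(\zeta,1-\zeta\right)\leqslant\zeta$, which rules out the during-training term lifting the gain above $\zeta$ --- a step the paper leaves implicit but which is consistent with its Lemma~\ref{lem:HiINI}-style accounting of inter-node interference.
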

 
This theorem is valid for both half-duplex and sequential beamforming strategy with open loop training. Sequential beamforming strategy does not provide extra spectral benefit at high $\snr$. It should be emphasized that in low-to-moderate $\snr$ regime, from the analysis in Section~\ref{sec:optSE}, sequential beamforming does obtain significant spectral efficiency gain. This difference is not observed for closed loop training systems, where significant spectral efficiency improvement is observed in all $\snr$ regime. 
\begin{figure*}[htb]
\centering
\includegraphics[scale=1]{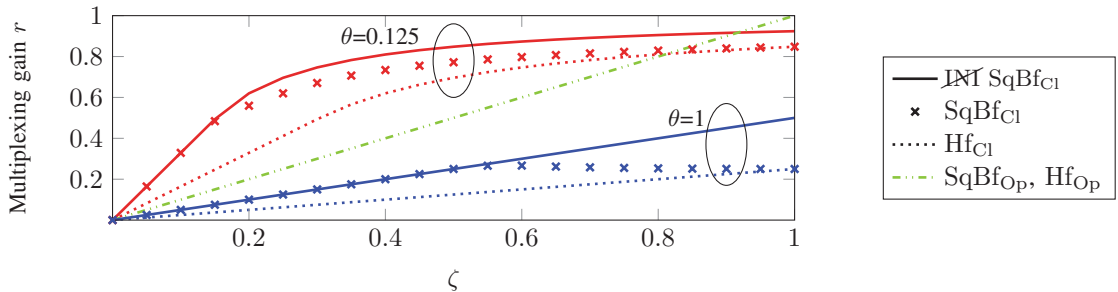} 
\caption{Multiplexing gain $r$ as a function of training power constraint $\zeta$.} 
\label{fig:rVSzeta}
\end{figure*} 

Before comparing the spectral efficiency performance of closed and open loop system asymptotically, we first validate the influence of the assumption that each user has a perfect knowledge of its channel on both closed and open loop training. This information is crucial for the decoding at the user side. It has been shown~\cite{caire2010multiuser} that, asymptotically, one training pilot from each base station antenna is both necessary and good enough for the influence of imperfect CSI on downlink rate to vanish for both closed and open loop training. 

Based on multiplexing gain characterization above, we compare the spectral efficiency of systems with different types of training. For systems with longer block length and few users, in general, closed loop training outperforms open loop training. The major reason is that the closed loop training significantly reduces inter-beam interference by learning from more training symbols. Despite the longer training duration, closed loop training is still more advantageous. However, if there are many antennas and block length is short, then it is better to use open loop training, whose training duration is asymptotically short. 
\section{Conclusion}  \label{sec:conclude}
With accurate CSI, Multiuser MIMO downlink has the potential to increase the spectral efficiency tremendously. In systems with many users, CSI acquisition leads to unavoidable training overhead.  This paper aims at reducing the overhead of multiuser MIMO downlink systems by utilizing full-duplex radios. 

With characterization of inter-node interference due to full-duplex training, we optimize the training duration of sequential beamforming strategy with both closed and open loop training. The proposed sequential beamforming strategy demonstrates significant spectral efficiency improvement compared to its half-duplex counterpart. Sequential beamforming strategy can also be applied in frequency-division duplex systems where uplink and downlink are orthogonal by nature. The orthogonality will also prevent the generation of inter-node interference.

The closed and open loop training methods exhibit distinct spectral efficiency performance. Asymptotically, the number of users, block length and training power jointly decide which type of training should be adopted. It has been observed in~\cite{marzetta2006fast} that closed loop training is more favorable than open loop training to reduce estimation error, while common sense suggests that open loop training is preferable for large systems. Our results quantify the decision region of training method and bridge these two observations.

We close this paper by noting some relevant issues. The training symbols are symmetrically allocated among users. Since the downlink receiving time decreases from User $1$ to User $M$,  extra spectral efficiency can be obtained by assigning training symbols decreasingly from User $1$ to User $M$. Moreover, we focus on providing the spectral efficiency characterization of a particular users subset to the scheduler. Thus, the MAC layer designs, such as deafness effect, are not considered. Past studies have revealed that picking users whose channel are more orthogonal reduces the inter-beam interference~\cite{yoo2007multi,dimic2005downlink}.  {  Finally, the considered base station is a full-duplex MIMO array where each antenna is used for both transmission and reception~\cite{bharadia2014full}. Another line of recent development of full-duplex designs~ \cite{duarte2012experiment,Jain2011practicalFD,everett2015measurement,aryafar2012midu,yin2013full} adopt separating transmission and reception antennas. For such array, the joint design and optimization of data service, CSI training, and full-duplex operation is an important future direction.}

\begin{appendices} 
\section{Proof of Lemma~\ref{lem:INI}}\label{App:INI} 
{  Since perfect CSI is assumed available at each user, the rate loss can be upper bounded by
 \begin{align} 
\Delta R^{\tr}  \leqslant &\mathbb{E}\left[\log\left(1+|\mathbf{h}_{i}\mathbf{v}_i|^2\frac{P}{M}\right)\right] \notag \\&-\mathbb{E}\left[\log\left( 1+|\mathbf{h}_{i}\mathbf{v}_i|^2\frac{P}{M} +|\mathbf{h}_{ik}x_{\mathrm{tr}_{k}}|^2\right)\right] 
\notag\\&+\mathbb{E}\left[\log\left( 1+\sum_{j\neq i}\frac{P}{M}|\mathbf{h}_{i}\mathbf{v}_j|^2+|\mathbf{h}_{ik}x_{\mathrm{tr}_{k}}|^2\right)\right].\label{equ:PrDeR}
\end{align}
The first step is to follow the same recipe in Appendix II step (a) of~\cite{caire2010multiuser}, and then ignoring the positive term $\sum_{j\neq i}\frac{P}{M}|\mathbf{h}_{i}\mathbf{v}_j|^2$ leads to the result above. The sum $\sum_{j\neq i}\frac{P}{M}|\mathbf{h}_{i}\mathbf{v}_j|^2$ and the term $|\mathbf{h}_{ik}x_{\mathrm{tr}_{k}}|^2$ quantify the interference due to imperfect precoding and full-duplex training, i.e., inter-beam and inter-node interference, respectively. We refer them as $\mathcal{P}_{\mathrm{IBI}}$ and $\mathcal{P}_{\ini}$. {  The inter-node interference is assumed to scale proportional to the training power of users, i.e., $\mathcal{P}_{\ini}=\alpha f P$.} Noting the concavity of logarithm function, similar to step (33) in Remark 4.2 in~\cite{caire2010multiuser}, we apply Jensen's inequality to obtain Lemma~\ref{lem:INI}. } 
%
\section{Proof of Theorem~\ref{thm:LenABC}}\label{App:OptLenABC} 
The rate increase term $\delta R^{\data}\left(T^{\tr}\right)$ of~\eqref{equ:optLenABfin} is characterized by applying Proposition~\ref{pop:IBI} as
$$
\delta R^{\data}\left(T^{\tr}\right)
\approx P\left(1+fP\right)^{-\frac{T^{\tr}}{M\left(M-1\right)}}[\log\left(1+fP\right)^{\frac{1}{M-1}}-1].
$$
Here the last step is directly obtained by using Taylor expansion. Substituting  into~\eqref{equ:optLenABfin}, we have
  \begin{align*}
&\left(1-\frac{M+1}{2M}\frac{T^{\tr}}{T}\right)P\left(1+fP\right)^{-\frac{T^{\tr}}{M\left(M-1\right)}}[\left(1+fP\right)^{\frac{1}{M-1}}-1]\\
&=\frac{M-1}{2T}  \Delta R^{\ini}+\frac{M+1}{2T} R^{\zf}.   
  \end{align*}
Noticing that it is an transcendental equation which is challenging to solve. Then omitting the $\frac{M+1}{2M}\frac{T^{\tr}}{T}$ term leads us to the theorem. This approximation is valid for large $T$.

\section{Proof of Theorem~\ref{thm:LenABO}}\label{App:OptLenABO}
For open loop training based systems, the rate improvement due to more training symbols can be obtained by using Proposition~\ref{pop:IBI} as
\begin{equation*}
\delta R^{\data}\left(T^{\tr}\right)
\approx \frac{(M-1)M} { f\left(\widetilde{T}_{\os}^{\tr}\right)^2   },
\end{equation*}
where the last step is the direct result of Maclaurin series. Combining with~\eqref{equ:optLenABfin} leads to
  \begin{equation*}
\left(1-\frac{M+1}{2M}\frac{\widetilde{T}^{\tr}_{\os}}{T}\right) \frac{(M-1)M} { f\left(\widetilde{T}_{\os}^{\tr}\right)^2   }=\frac{M-1}{2T}  \Delta R^{\ini}+\frac{M+1}{2T} R^{\zf},
\end{equation*} 
whose solution leads to the theorem. 
\section{Proof of Theorem~\ref{thm:LenHDC}}\label{App:OptLenHDC}
The rate increase term $\delta R^{\data}\left(T^{\tr}\right)$ is immediately approximated by using results from Appendix~\ref{App:OptLenABC}. Applying this rate characterization term to evaluate ~\eqref{equ:optHDMU} gives
$$
\left(1-\frac{T^{\tr}}{T}\right)\frac{\left(1+fP\right)^{\frac{1}{M-1}}-1}{P\left(1+fP\right)^{-\frac{T^{\tr}}{M\left(M-1\right)}}}=\frac{M}{T}R^{\zf},
$$
which is a transcendental equation. Following the same step in Appendix~\ref{App:OptLenABC}, we omit the $\frac{T^{\tr}}{T}$ to obtain the theorem. 
\section{Proof of Theorem~\ref{thm:LenHDO}}\label{App:OptLenHDO}
 Using the rate increase characterization term $\delta R^{\data}\left(T^{\tr}\right)$ in Appendix~\ref{App:OptLenABO} and further applying ~\eqref{equ:optHDMU},~\eqref{equ:optHDMC} gives
  \begin{equation*}
\left(1- \frac{T^{\tr}}{T}\right)\frac{(M-1)M} { f\left(\widetilde{T}_{\os}^{\tr}\right)^2   }=\frac{M}{T}R^{\zf},
\end{equation*} 
whose solution is the theorem.
\section{Proof of Theorem~\ref{thm:SEABC}}\label{App:OptSEABC}
Evaluating the achieved spectral efficiency of sequential beamforming strategy with approximated optimal training duration $\widetilde{T}_{\osc}^{\tr *}$ obtained in Theorem~\ref{thm:LenABC} gives upper bound
 \begin{align*}
&\Delta \se_{\osc}\left(T_{\osc}^{\tr *}\right)
 \leqslant  R^{\zf}+\frac{M-1}{2M}\frac{\widetilde{T}_{\osc}^{\tr *}}{T} \Delta R^{\ini} \\
 &-\left(1-\frac{M+1}{2M}\frac{\widetilde{T}_{\osc}^{\tr *}}{T}\right)\left[R^{\zf}-\log\left(  1+P\left(1+fP\right)^{-\frac{\widetilde{T}_{\osc}^{\tr *}}{M(M-1)}}\right)\right].
\end{align*}
Omitting the negative term $$-\frac{M+1}{2M}\frac{\widetilde{T}_{\osc}^{\tr *}}{T}\log\left(  1+P\left(1+fP\right)^{-\frac{\widetilde{T}_{\osc}^{\tr *}}{M(M-1)}}\right)$$ and sorting the small term with respect to $\frac{1}{\sqrt{T}}$ lead to the theorem.
\section{Proof of Theorem~\ref{thm:SEABO}}\label{App:OptSEABO}
Following similar analysis as that of Appendix~\ref{App:OptSEABC}, we substitute approximation of optimal training duration from Theorem~\ref{thm:LenABO} into~\eqref{equ:SEosDeatiled} to characterize the spectral efficiency loss as
\begin{align*}
 &\Delta \se_{\oso}\left(T_{\oso}^{\tr *}\right)
 \leqslant  R^{\zf}+\frac{M-1}{2M}\frac{\widetilde{T}_{\oso}^{\tr *}}{T} \Delta R^{\ini} \\
 &-\left(1-\frac{M+1}{2M}\frac{\widetilde{T}_{\oso}^{\tr *}}{T}\right)\left[R^{\zf}-\log\left(  1+\frac{(M-1)\frac{P}{M}}{1+ f  \widetilde{T}_{\oso}^{*}P/M}\right)\right].
 \end{align*}
 Then dropping the negative term $-\frac{M+1}{2M}\frac{\widetilde{T}_{\oso}^{\tr *}}{T}\log\left(  1+\frac{(M-1)\frac{P}{M}}{1+ f  \widetilde{T}_{\oso}^{*}P/M}\right)$  and sorting small term with respect to $\frac{\log T}{T}$ lead to the theorem.

\section{Proof of Theorem~\ref{thm:SEHDC}}\label{App:OptSEHDC} 
Similar to systems adopting sequential beamforming strategy, the spectral efficiency gap of the half-duplex counterparts with respect to the genie-aided scenario can be immediately upper bounded by evaluating the sub-optimal scheme $\se_{\hdc}\left(\widetilde{T}_{\hdc}^{*}\right)$: 
 \begin{align*}
\Delta \se_{\hdc}\left(T_{\hdc}^{\tr *}\right)
\stackrel{(a)}{\leqslant} & \frac{T_{\hdc}^{\tr *}}{T} R^{\zf}+\log\left(  1+P\left(1+fP\right)^{-\frac{\widetilde{T}_{\hdc}^{\tr *}}{M(M-1)}}\right)
\\ =& R^{\zf} \frac{M\left(M-1\right)}{\log\left(1+fP\right)}\frac{\log T}{T}+o\left(\frac{\log T}{T}\right).
\end{align*}
Inequality $(a)$ is the result of dropping negative term $-\frac{\widetilde{T}_{\hdc}^{*}}{T}\log\left(  1+P\left(1+fP\right)^{-\frac{\widetilde{T}_{\hdc}^{\tr *}}{M(M-1)}}\right)$. Applying training time approximation in Theorem~\ref{thm:LenHDC} gives the final step.
  
\section{Proof of Theorem~\ref{thm:SEHDO}}\label{App:OptSEHDO}
Inspired by~\cite{kobayashi2011training}, spectral efficiency gap with respect to genie-aided situation can be immediately upper bounded by evaluating $\se_{\hdo}\left(\widetilde{T}_{\hdo}^{*}\right)$
 \begin{align*}
\Delta \se_{\hdo}\left(T_{\hdo}^*\right)\leqslant& \se^{\zf}-\se_{\hd}\left(\widetilde{T}_{\hdo}^{*}\right)\\ \stackrel{(a)}{\leqslant}&  \frac{\widetilde{T}_{\hd}^{*}}{T}R^{\zf}+\log\left(  1+\frac{(M-1)\frac{P}{M}}{1+ f  \widetilde{T}_{\hdo}^{*}P/M}\right) \\
\leqslant& \sqrt{\frac{(M-1)R^{\zf}}{fT}}+\sqrt{\frac{R^{\zf}(M-1)}{fT}}\\=& 2\sqrt{\frac{(M-1)R^{\zf}}{fT}}.
\end{align*}
Inequality $(a)$ is obtained by dropping negative term $-\frac{\widetilde{T}_{\hdo}^{*}}{T}\log\left(  1+\frac{(M-1)\frac{P}{M}}{1+ f  \widetilde{T}_{\hd}^{*}P/M}\right)$. The next step is the result of
Maclaurin expansion of the logarithm term, which is tight for large $T$. 

\section{Proof of Theorem~\ref{thm:HiABC}}\label{App:HiABC}
Studying~\eqref{equ:HiSEAB} in different regimes of operation gives the following.
\begin{enumerate}
\item When $\zeta \geqslant 1$, the multiplexing gain is
the same as the multiplex gain of half-duplex systems. Thus, $ r^{*}_{\osc}(\zeta)=r^{*}\left(\zeta\right)_{\hdc}.  $

\item When $\zeta < 1$

\begin{itemize}
\item  $\frac{ T^{\tr}}{T} \leqslant  \theta\frac{1-\zeta}{\zeta}$:
\begin{align*}
r_{\osc}\left(\zeta,T^{tr}\right)&=   \frac{ T^{\tr}}{T}\frac{\zeta}{\theta} -\frac{1}{2}\left(\frac{T^{\tr}}{T}\right)^2 \frac{\zeta}{\theta}\\
&=r_{\osc \cancel{\ini}}\left(\zeta,T^{tr}\right)
\end{align*}
 
\item  $\theta\frac{1-\zeta}{\zeta}<\frac{ T^{\tr}}{T} <\frac{\theta}{\zeta} $: 
$$
r_{\osc}\left(\zeta,T^{tr}\right)=\frac{1}{2}\frac{T^{\tr}}{T}  \left( 1-\zeta\right)+ \left(1-\frac{T^{\tr}}{T}\right)  \frac{ T^{\tr}}{T}\frac{\zeta}{\theta} 
$$
 
\item  $\frac{\theta}{\zeta} \leqslant \frac{ T^{\tr}}{T}$:
$$
r_{\osc}\left(\zeta,T^{tr}\right)=  \frac{1}{2} \frac{ T^{\tr}}{T}\left(1-\zeta\right) +\left( 1-\frac{T^{\tr}}{T}\right) 
$$
\end{itemize}
By carefully evaluating the derivative in different regimes and applying the optimized training duration into equation~\eqref{equ:HiSEAB} lead to the theorem.
\end{enumerate}   
\end{appendices}
\section*{Acknowledgment}
\addcontentsline{toc}{section}{Acknowledgment}
We want to thank the authors of~\cite{everett2015measurement} for providing the measurement data used in Section~\ref{sec:optSE}.
\bibliographystyle{ieeetr}
\bibliography{SqBf_ref}
\end{document}